    \theoremstyle{break}
\newtheorem{proposition}{Proposition}
\newtheorem{theorem}[proposition]{Theorem}
\newtheorem{corollary}[proposition]{Corollary}
\newtheorem{lemma}[proposition]{Lemma}
\newtheorem{definition}[proposition]{Definition}
\newtheorem{example}[proposition]{Example}
\newtheorem{remark}[proposition]{Remark}
\theoremstyle{nonumberbreak}
\theoremstyle{nonumberbreak}
\newtheorem{convention}{Convention}
    \newtheorem{proof}{Proof}
\newtheoremstyle{denisproof}%
  {\item[\rlap{\vbox{\hbox{\hskip\labelsep \theorem@headerfont
          ##1\theorem@separator}\hbox{\strut}}}]}%
  {\item[\rlap{\vbox{\hbox{\hskip\labelsep \theorem@headerfont
          ##1\ ##3\theorem@separator}\hbox{\strut}}}]}
\theoremstyle{denisproof}
\theoremstyle{break}
\newcommand{\acceptable}{acceptable\xspace}
\newcommand{\unacceptable}{unacceptable\xspace}
\newcommand{\Acceptable}{Acceptable\xspace}
\newcommand{\Unacceptable}{Unacceptable\xspace}
\newcommand{\txif}{\text{ if }}
\newcommand{\Not}[1]{\text{\textit{Not}}\lbrack #1 \rbrack}
\newcommand{\card}[1]{\left| #1 \right|}
\newcommand{\ens}[1]{\lbrace #1 \rbrace}
\newcommand{\quotient}[2]{\mathord{#1}/\mathord{#2}}
\renewcommand{\emptyset}{\varnothing}   
\newcommand{\vide}{\varnothing}
\newcommand{\Max}{\operatorname{Max}}
\newcommand{\Min}{\operatorname{Min}}
\newcommand{\Natp}{\mathbb{N}^{+}}
\newcommand{\wrt}{w.r.t.\ }
\newcommand{\resp}{resp.\ }
\newcommand{\ie}{i.e.,\xspace}
\newcommand{\eg}{e.g.,\xspace}
\newcommand{\wl}{w.l.o.g.\xspace}
\newcommand{\Elec}{\textsc{Electre}\xspace}
\newcommand{\TRI}{\textsc{Electre Tri}\xspace}
\newcommand{\TRIs}{\textsc{ETri}\xspace}
\newcommand{\lTB}{\textsc{Electre Tri-B}\xspace}
\newcommand{\lnTB}{\textsc{Electre Tri}-nB\xspace}
\newcommand{\nTB}{\TRIs-nB\xspace}
\newcommand{\nTBI}{\TRIs-nB-I\xspace}
\newcommand{\TB}{\TRIs-B\xspace}
\newcommand{\TBI}{\TRIs-B-I\xspace}
\newcommand{\nTBpc}{\nTB-pc\xspace}
\newcommand{\nTBIpc}{\nTBI-pc\xspace}
\newcommand{\nTBpd}{\nTB-pd\xspace}
\newcommand{\nTBIpd}{\nTBI-pd\xspace}
\newcommand{\TC}{\textsc{ETri-C}\xspace}
\newcommand{\nTC}{\textsc{ETri}-nC\xspace}
\newcommand{\TBpc}{\mbox{\TB-pc}\xspace}
\newcommand{\TBpd}{\mbox{\TB-pd}\xspace}
\newcommand{\lineari}{$i$-\linear}
\newcommand{\linear}{linear\xspace}
\newcommand{\rel}{\mathrel{T}}
\newcommand{\irel}{\mathrel{\rel^{\iota}}}
\newcommand{\arel}{\mathrel{\rel^{\alpha}}}
\newcommand{\jrel}{\mathrel{\rel^{\sigma}}}
\newcommand{\brel}{\mathrel{V}}
\newcommand{\M}{\ensuremath{(\Ms)}}
\newcommand{\Ms}{\ensuremath{E}}
\newcommand{\MMM}{\ensuremath{(\widetilde{E})}}
\newcommand{\D}{\ensuremath{(\D1)}}
\newcommand{\rS}{\mathrel{S}}
\newcommand{\rI}{\mathrel{I}}
\newcommand{\rP}{\mathrel{P}}
\newcommand{\rU}{\mathrel{U}}
\newcommand{\rSi}{\mathrel{S_{i}}}
\newcommand{\rIi}{\mathrel{I_{i}}}
\newcommand{\rPi}{\mathrel{P_{i}}}
\newcommand{\rVi}{\mathrel{V_{i}}}
\newcommand{\rUi}{\mathrel{U_{i}}}
\newcommand{\rWi}{\mathrel{W_{i}}}
\newcommand{\Surc}{\mathrel{S}}
\newcommand{\pSurc}{\mathrel{P}}
\newcommand{\aSurc}{\mathrel{P}}
\newcommand{\Set}{X}
\newcommand{\Seti}{\ensuremath{\Set_{i}}}
\newcommand{\N}{\ensuremath{N}}
\newcommand{\Aa}{\ensuremath{\mathcal{A}}}
\newcommand{\Uu}{\ensuremath{\mathcal{U}}}
\newcommand{\PART}{\ensuremath{\langle\Aa,\Uu\rangle}}
\newcommand{\As}{\ensuremath{\Aa_{*}}}
\newcommand{\F}{\ensuremath{\mathcal{F}}}  
\newcommand{\Fs}{\ensuremath{\mathcal{F}_{*}}}
\newcommand{\Pro}{\mathcal{P}}
\newcommand{\relationnormale}{\succsim}
\newcommand{\arelationnormale}{\succ}
\newcommand{\srelationnormale}{\sim}
\newcommand{\qodom}{\relationnormale}
\newcommand{\aqodom}{\arelationnormale}
\newcommand{\essi}{\srelationnormale}
\newcommand{\relsi}{\relationnormale_{i}}
\newcommand{\esi}{\srelationnormale_{i}}
\newcommand{\asi}{\arelationnormale_{i}}
\let\@fnsymbol\@alph
\title{A theoretical look at \lnTB and related sorting models%
\,\thanks{Authors are listed alphabetically. They have contributed equally.}}
\author{Denis Bouyssou\,\thanks{LAMSADE, UMR\,7243, CNRS,
Universit\'{e} Paris-Dauphine, PSL Research University, 75016 Paris, France,
e-mail: \protect\url{bouyssou@lamsade.dauphine.fr}. \textbf{Corresponding author}.}
\and
Thierry Marchant\,\thanks{Ghent University, Department of Data Analysis,
H.\ Dunantlaan, 1, B-9000 Gent, Belgium,
e-mail: \protect\url{thierry.marchant@UGent.be}.}
\and
Marc Pirlot\,\thanks{Universit\'{e} de Mons, rue de Houdain 9, 7000 Mons, Belgium,
e-mail: \protect\url{marc.pirlot@umons.ac.be}.}}
\date{June 30, 2021}
\begin{document}
\maketitle

\begin{abstract}
\TRI is a set of methods designed to sort
alternatives evaluated on several criteria
into ordered categories. In these methods, alternatives are assigned to categories by comparing them with reference profiles that represent either the boundary or central elements of the category.  The original \lTB method  uses one limiting profile for separating a category from the category below.
A more recent method, \lnTB, allows one to use several limiting profiles
for the same purpose.
We investigate the properties of \lnTB using a conjoint measurement framework.
When the number of limiting profiles used to define each category is not restricted,
\lnTB is easy to characterize axiomatically and is found to be equivalent
to several other methods proposed in the literature.
We extend this result in various directions.

\smallskip

\noindent\textbf{Keywords}:
Multiple criteria analysis, Sorting models, \TRI.
\end{abstract}

\section{Introduction}\label{se:introduction}
\TRI
\footnote{We often abbreviate \TRI as \TRIs in what follows.}
 is a family of methods for sorting alternatives evaluated on several criteria into ordered categories. The principle of these methods is that they assign an alternative to a category by comparing it with profiles specifying levels on each criterion. Comparisons are made by using an \emph{outranking} relation which is typical of the \textsc{Electre} methods. In its original version, \TB
\citep{YuWei92,RoyBouyssou93}, each profile represents the limit between a category and the category below. Therefore, they are called \emph{limiting profiles}. In contrast, in \TC \citep{AlmeidaDiasFigueiraRoy2010}, each category is represented by a typical profile, therefore called \emph{central profile}.

For an introduction to the \textsc{Electre} methods, we refer the reader to \citet[][Ch.~8]{BeltonStewart01}. Overviews of these methods can be found in
\citet[Ch.\ 5 \& 6]{RoyBouyssou93}, \citet{FigueiraELECTREHandbook2010},
\citet{FigueiraELECTREJMCDA2013}, and \citet{FigueiraMousseauRoy2016}.

Recently, \citet{FernandezEJOR2017} proposed a method called
%
\lnTB. It is an extension of \TB, and, thus, uses limiting profiles.
Whereas \TB uses one limiting profile per category,
\nTB allows one to use several limiting profiles
for each category.

\nTB deserves close attention for at least two reasons.
First, as explained in \citet{BouyssouMarchantEJOR2015},
\TRIs can be considered as a real success story within the \textsc{Electre} family of methods.
A closely related model, the NonCompensatory Sorting (NCS) model, has received a fairly complete axiomatic analysis in
\citet{BouyssouMarchant2007:I,BouyssouMarchant2007:II}.
\TRIs has been applied to a large variety of real world problems
\citetext{see the references in \citealp[][Sect.~6]{AlmeidaDiasFigueiraRoy2010}, as well as
\citealp[][Ch.\ 6, 10, 12, 13, 15, 16]{BisdorffEtAl2015}}.
Many techniques have been proposed for the elicitation of the parameters of this method
\citep[see the references in][Sect.~1]{BouyssouMarchantEJOR2015}.

Second, the extension presented with \nTB is most welcome.
Since outranking relations are not necessarily complete,
one may easily argue that it is 
natural to try to characterize
a category using several limiting profiles, instead of just one.
Moreover, compared to \TB, \nTB gives more flexibility to the decision-maker
to define categories using limiting profiles, as observed by \citet[][Remark~3, p.~217]{FernandezEJOR2017}\,\footnote{Let
us also mention that \citet{FernandezEJOR2017} is the last
paper on \textsc{Electre} methods published by Bernard Roy, the founding father of \textsc{Electre} methods,
before he passed away at the end of 2017.
}.

In this paper, we analyze \nTB from a theoretical point of view.
Our aim is to give a complete characterization of this method without any supplementary hypotheses. This is, in a sense, in contrast with \citet{BouyssouMarchant2007:I,BouyssouMarchant2007:II} who characterize a model close to \TB, which is not exactly \TB \citep[it differs from it, in particular, by  considering ``quasi-criteria'' instead of the more general ``pseudo-criteria'' used in \TB, see][pp.~55--56, for definitions]{RoyBouyssou93}. As far as we know, this is the first time that an axiomatic foundation is provided for a complete outranking method (encompassing the construction of the outranking relation and the exploitation phase). The usefulness of such axiomatic analyses
has been discussed elsewhere and will not be repeated here
\citep{BouyssouPirlot2015AOR,DekelLipman2010,GilboaEtAlAxiomaticsTHEO2019}.
Our main finding is that, if the number of profiles used to delimit each category
is not restricted, the axiomatic analysis of \nTB is easy and rests
on a condition, linearity, that is familiar in the analysis of sorting models
\citep{Goldstein91JMP,BouyssouMarchant2007:I,BouyssouMarchant2007:II,BouyssouMarchant2010EJORadditive,GMS04EJOR,
GMS02Control,GrecoMatarazzoSlowinski01Colorni}.
Our simple result shows the equivalence
between \nTB and many other sorting models proposed in the literature.
It could also allow one to use elicitation or learning techniques developed for these other models for the application of \nTB.
This is useful since \citet{FernandezEJOR2017} did not propose any elicitation technique
\citep[an elicitation technique was suggested afterwards in][]{FernandezSoftComputing2019}.

The rest of this text is organized as follows.
In the next section, we recall the definitions of \TB and \nTB. We motivate the theoretical investigation that follows by analyzing an example of an \nTB model.
Section~\ref{se:Notation} introduces our notation and framework.
Section~\ref{sec:model:m} presents our main results about the pseudo-conjunctive version of \nTB.
Section~\ref{sec:model:ext} presents various extensions of these results.
A final section discusses our findings.
An appendix, containing supplementary material to this paper, will allow us to keep
the text of manageable length. Its content will be detailed when needed.

\section{\nTB: definitions and examples}\label{se:motivation}
For the ease of future reference, we first recall the definitions of \TB and \nTB
. For keeping it simple, we limit ourselves to sorting alternatives into two categories, say the ``\acceptable'' and the ``\unacceptable''.
For a more detailed description,
we refer the reader to
\citet{YuWei92,RoyBouyssou93,MousseauSlowinskiZielniewicz2000,FernandezEJOR2017}.
We refer to \citet{BouyssouMarchantEJOR2015} for an analysis of the importance of the various \TRI methods
within the set of all \textsc{Electre} methods.

In the second subsection, we informally analyze an example of an \nTB model in order to motivate the theoretical investigation conducted in the rest of the paper.

\subsection{\TB and \nTB}\label{ssec:etrib_nB}
All \TRI methods are based on the definition of an outranking relation. There are several ways of defining such a relation.
\subsubsection{The outranking relations in \textsc{Electre}~III and in  \textsc{Electre}~I}\label{sssec:outranking}
A crisp outranking relation $S$ (with asymmetric part $P$) comparing pairs of alternatives as in \textsc{Electre}~III
\citep[see][p.~284--289]{RoyBouyssou93} is built by cutting a valued relation $\sigma$ at a certain level $\lambda$. The value associated to each pair in the relation $\sigma$ is called the \emph{outranking credibility index}. It implements (see formula \eqref{eq:degreeCredib} below) the principle of outranking, \ie an alternative $x$ outranks an alternative $y$ if $x$ is at least as good as $y$ on a sufficiently important set of criteria (concordance) and $x$ is  unacceptably worse than $y$ on no criterion (non-discordance). Let $x, y$ be two alternatives respectively represented by their evaluations $(g_1(x), \ldots g_i(x), \ldots g_n(x))$, $(g_1(y), \ldots, g_i(y), \ldots, g_n(y))$ \wrt $n$ criteria. For all $i=1, \ldots, n$, $g_i$ is a real valued function defined on the set of alternatives. 

The concordance index $c(x,y) = \sum_{i=1}^{n} w_i c_i(g_i(x), g_i(y))$, where $w_i \geq 0$ is the importance weight of criterion $i$ (we assume \wl that weights sum up to 1) and $c_i(g_i(x), g_i(y))$ is a function represented in Figure \ref{fig:shapesConcDiscEIII}. Its definition involves the determination of $qt_i$ (resp. $pt_i$), the \emph{indifference} (resp. \emph{preference}) threshold.

\begin{figure}[h!!!]
  \centering
  \includegraphics[width=10cm]{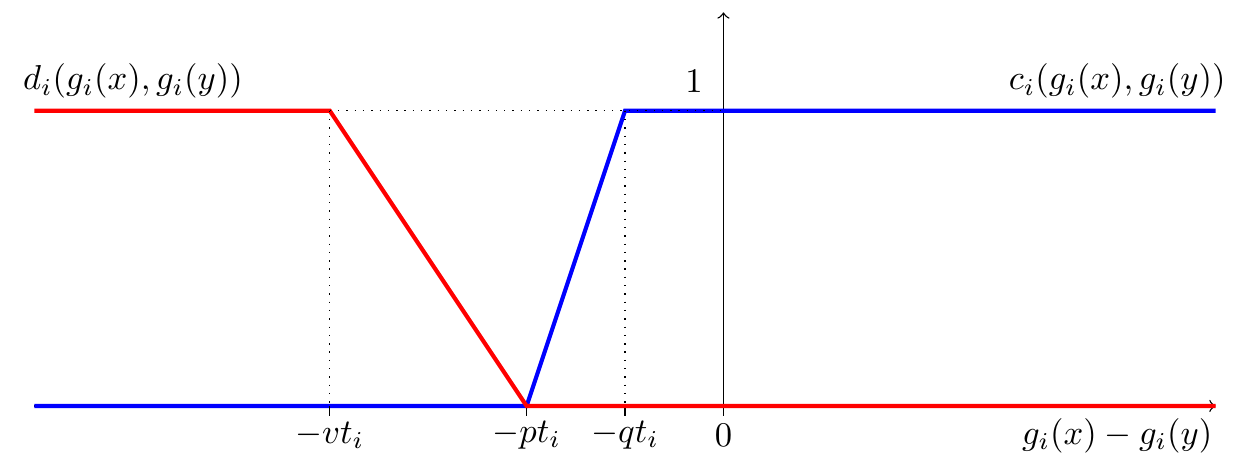}
  \caption{Shapes of the single criterion concordance index $c_i(g_i(x), g_i(y))$ and the discordance index $d_i(g_i(x), g_i(y))$ in \textsc{Electre} III}\label{fig:shapesConcDiscEIII}
\end{figure}

The discordance index $d_i(g_i(x), g_i(y))$, also represented in Figure \ref{fig:shapesConcDiscEIII}, uses an additional parameter $vt_i$, the \emph{veto} threshold\footnote{For the sake of simplicity, the thresholds $qt_i, pt_i$ and $vt_i$ are taken as constant. Nothing in the sequel depends on this option. They could be considered as variable provided appropriate conditions are enforced, actually ensuring that the corresponding weak preference, preference and veto relations form an homogenous chain of semiorders \citep[see][p.~56 and pp.~140--141 for details]{RoyBouyssou93}.}.

The outranking credibility index $\sigma(x,y)$ is computed as follows:
\begin{equation}\label{eq:degreeCredib}
  \sigma(x,y)= c(x,y) \prod_{i: d_i(g_i(x) ,g_i(y)) > c(x,y)} \frac{1-d_i(g_i(x) ,g_i(y))}{1-c(x,y)}.
\end{equation}
Alternative $x$ outranks alternative $y$, \ie $xSy$, if $\sigma(x,y) \geq \lambda$, with $.5 \leq \lambda \leq 1$.

In order that $x$ outranks $y$, $c(x,y)$ has to be greater than or equal to $\lambda$. This index is ``locally compensatory'' in the sense that, for each $i$, there is an interval (namely, $[-pt_i, -qt_i]$) for the differences $g_i(x) - g_i(y)$ on which the single criterion concordance index increases linearly and these indices are aggregated using a weighted sum. Discordance also is gradual in a certain zone (namely $[-vt_i, -pt_i]$); it comes into play only when the discordance index $d_i(g_i(x) ,g_i(y))$ is greater than the overall concordance index $c(x,y)$.

A simpler, more ordinal, version of the construction of an outranking relation stands in the spirit of \textsc{Electre}~I. It is also more amenable to theoretical investigation: see the characterization of outranking relations \citep{BouyssouPirlot2016Conjoint} and the analysis of the noncompensatory sorting model \citep{BouyssouMarchant2007:I,BouyssouMarchant2007:II}.
It differs from the above mainly by the shapes of the single criterion concordance and discordance indices (see Figure \ref{fig:shapesConcDiscEI}).

\begin{figure}[h!!!]
  \centering
  \includegraphics[width=10cm]{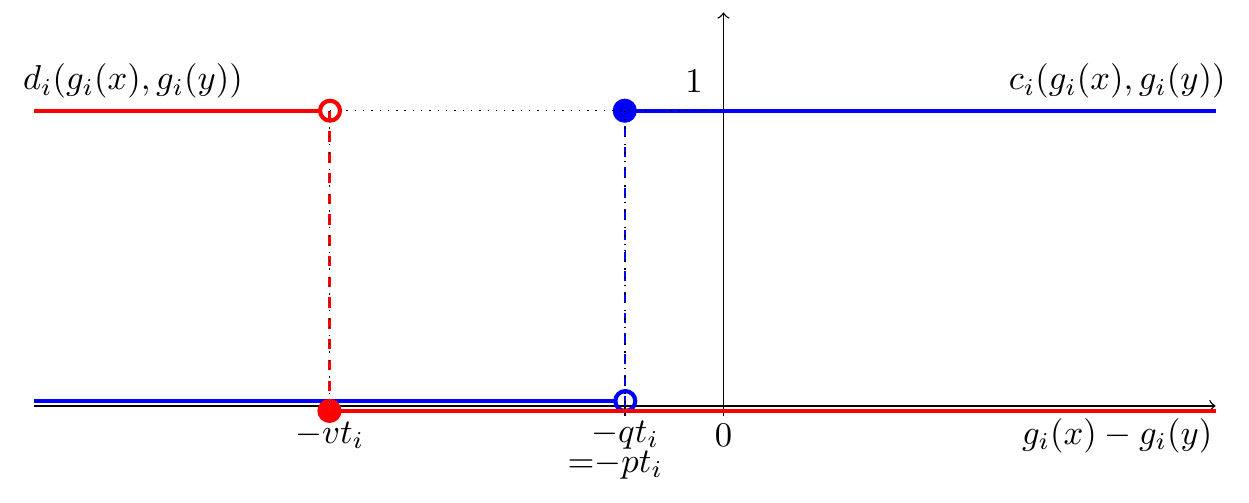}
  \caption{Shapes of the single criterion concordance index $c_i(g_i(x) ,g_i(y))$ and the discordance index $d_i(g_i(x) ,g_i(y))$ in the style of \textsc{Electre}~I. Empty (resp. filled) circles indicate included (resp. excluded) values.} \label{fig:shapesConcDiscEI}
\end{figure}

The preference and indifference thresholds are confounded, which implies that there is no linear ``compensatory'' part in $c_i(g_i(x) ,g_i(y))$; discordance only occurs in an all-or-nothing manner. The overall concordance index $c(x,y) = \sum_{i=1}^{n} w_i c_i(g_i(x) ,g_i(y))$, as above. In this construction, $x$ outranks $y$, \ie $xSy$, if $\sigma(x,y) \geq \lambda$, with

\begin{equation}\label{eq:degreeCredibOrdinal}
  \sigma(x,y)= c(x,y) \prod_{i=1}^{n} (1-d_i(g_i(x) ,g_i(y))),
\end{equation}
\ie $xSy$ if $c(x,y) \geq \lambda$ and $d_i(g_i(x) ,g_i(y)) =0$, for all $i$. Note that
\begin{equation*}
  c(x,y) = \sum_{i:g_i(x) \geq g_i(y)-qt_i} w_i.
\end{equation*}
We thus have $c(x,y) \geq \lambda$ if the sum of the weights of the criteria on which $x$ is indifferent or strictly preferred to $y$ is at least equal to $\lambda$. Subsets of criteria of which the sum of the weights is at least $\lambda$ will be called \emph{winning coalitions} (of criteria).

\smallskip
Notice that both the \textsc{Electre~III} outranking relation defined by means of \eqref{eq:degreeCredib} and the \textsc{Electre~I} outranking relation defined by means of \eqref{eq:degreeCredibOrdinal} respect the \emph{dominance relation}\footnote{The (weak) dominance relation $\geq$ is a reflexive and transitive relation on the set of alternatives, that is defined as follows: $x \geq y$ if $g_i(x) \geq_i g_i(y)$, for all $i$. This is the relation denoted $\Delta_F$ by \citet[][p.~61]{RoyBouyssou93}, $F$ referring to a family of criteria. \label{foot:dominance}} $\geq$. This is easily seen by observing that both formulae \eqref{eq:degreeCredib} and \eqref{eq:degreeCredibOrdinal} are nondecreasing in $g_i(x)$ and nonincreasing in $g_i(y)$, for all $i$. We note this fact in the following proposition for further reference.

\begin{proposition}\label{prop:SrespectsDom}
Let $S$ denote an outranking relation of \textsc{Electre~III} or \textsc{Electre~I} type. The relation $S$ respects the dominance relation $\geq$, \ie for all alternatives $x,y,z,w$,
\begin{equation*}
  [xSy, z\geq x \ \textrm{and} \ y \geq w] \quad \Rightarrow \quad zSw.
\end{equation*}
\end{proposition}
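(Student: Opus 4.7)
The plan is to derive the proposition as an immediate corollary of componentwise monotonicity of the credibility index $\sigma(x,y)$: namely, $\sigma$ is nondecreasing in each $g_i(x)$ and nonincreasing in each $g_i(y)$. Since $xSy$ is defined by the threshold condition $\sigma(x,y) \geq \lambda$, this monotonicity will yield $\sigma(z,w) \geq \sigma(x,y) \geq \lambda$ whenever $z \geq x$ and $y \geq w$ coordinatewise, hence $zSw$, as required.

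First I would read off Figures~\ref{fig:shapesConcDiscEIII} and~\ref{fig:shapesConcDiscEI} the componentwise monotonicity of the single-criterion indices. In both \textsc{Electre~III} and \textsc{Electre~I} styles, $c_i(g_i(x), g_i(y))$ is a nondecreasing function of the difference $g_i(x) - g_i(y)$, hence nondecreasing in $g_i(x)$ and nonincreasing in $g_i(y)$; the discordance $d_i(g_i(x), g_i(y))$ has the opposite monotonicities, so $1 - d_i$ inherits those of $c_i$. Taking the weighted sum gives the same monotonicities for $c(x,y) = \sum_i w_i c_i$. This immediately disposes of the \textsc{Electre~I} case: by~\eqref{eq:degreeCredibOrdinal}, $\sigma(x,y)$ is a product of nonnegative terms each with the desired monotonicities, so $\sigma$ enjoys them too.

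The \textsc{Electre~III} case is more delicate because the index set $\{i : d_i(g_i(x), g_i(y)) > c(x,y)\}$ over which the product in~\eqref{eq:degreeCredib} runs depends itself on $x$ and $y$. I would sidestep this by rewriting the formula uniformly over all criteria as
\begin{equation*}
\sigma(x,y) = c(x,y) \prod_{i=1}^{n} \min\!\left(1,\, \frac{1 - d_i(g_i(x), g_i(y))}{1 - c(x,y)}\right),
\end{equation*}
the case $c(x,y) = 1$ being trivial since then $\sigma(x,y) = 1$. This agrees with~\eqref{eq:degreeCredib} because whenever $d_i \leq c$ the fraction is at least $1$ and is clipped to $1$, contributing nothing to the product. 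Increasing $g_k(x)$ (the argument for decreasing $g_k(y)$ is symmetric) raises $c(x,y)$ and $1 - d_k$, leaves $1 - d_i$ unchanged for $i \neq k$, and lowers $1 - c(x,y)$; hence each of the $n$ fractions is nondecreasing, each clipped factor is nondecreasing, and the leading factor $c(x,y)$ is nondecreasing, so $\sigma$ is nondecreasing in $g_k(x)$.

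The main obstacle is precisely the variable index set in the original formula~\eqref{eq:degreeCredib}, which blocks a naive term-by-term comparison; once the clipping reformulation is in place the verification reduces to tracking signs, and the conclusion of the proposition follows by applying coordinatewise monotonicity once for each of the $n$ coordinates of $z$ and then once for each of the $n$ coordinates of $w$.
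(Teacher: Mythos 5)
Your proposal is correct and follows the same route as the paper, which justifies the proposition solely by the observation that both \eqref{eq:degreeCredib} and \eqref{eq:degreeCredibOrdinal} are nondecreasing in $g_i(x)$ and nonincreasing in $g_i(y)$. Your rewriting of \eqref{eq:degreeCredib} with clipped factors $\min\bigl(1, (1-d_i)/(1-c)\bigr)$ merely makes rigorous the point the paper leaves implicit (the $x,y$-dependence of the index set of the product), so it is a more careful write-up of the same argument rather than a different one.
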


\subsubsection{\TB}\label{sssec:etriB}

The sorting of an alternative $x$ into category \Aa\ (\acceptable) or \Uu\ (\unacceptable) is based upon the comparison of
$x$ with a limiting profile\,%
%
%
$p$ using the relation $\Surc$.

In the pessimistic version of \TB, now known, following \citet{AlmeidaDiasFigueiraRoy2010},
as the pseudo-conjunctive version (\TBpc), we have, for all $x \in \Set$,
\begin{equation*}
x \in \Aa \Leftrightarrow x \Surc p.
\end{equation*}
In the optimistic version of \textsc{Electre Tri}, now known as the pseudo-disjunctive
version (\TBpd), we have, for all $x \in \Set$,
\begin{equation*}
x \in \Aa \Leftrightarrow \Not{p \pSurc x},
\end{equation*}
where $\pSurc$ is the asymmetric part of $\Surc$. Consequently, we have $x \in \Uu \iff p \pSurc  x$.



\subsubsection{\nTB}\label{sssec:etrinB}

We now have a set of $k$ limiting profiles $\Pro = \{p^{1}, p^{2}, \dots, p^{k}\}$.
This set of limiting profiles must be such that, for all $p, q \in \Pro$,
we have $\Not{p \aSurc q}$.

In the pseudo-conjunctive version of \nTB (\nTBpc, for short), we have that
\begin{equation*}
x \in \Aa \Leftrightarrow
\begin{cases}
x \Surc p        & \text{ for some } p \in \Pro, \text{ and}\\
\Not{q \aSurc x} & \text{ for all }  q \in \Pro,
\end{cases}
\end{equation*}
and $x \in \Uu$, otherwise.

In the pseudo-disjunctive version of \nTB (\nTBpd, for short), we have that
\begin{equation*}
x \in \Uu \Leftrightarrow
\begin{cases}
p \aSurc x        & \text{ for some } p \in \Pro, \text{ and}\\
\Not{x \aSurc q}  & \text{ for all }  q \in \Pro,
\end{cases}
\end{equation*}
and $x \in \Aa$, otherwise.

\TBpc and \TBpd are particular cases of \nTBpc and \nTBpd, respectively. In this section, we consider only \nTBpc and omit the suffix ``pc''. We shall only turn back, briefly, to \nTBpd in Section~\ref{sse:disjunctive}. Following \citet{FernandezEJOR2017}, unless otherwise mentioned, we use the \textsc{Electre}~III outranking relation $S$ defined by means of \eqref{eq:degreeCredib}. The version of \nTB using the \textsc{Electre}~I outranking relation $S$ defined via \eqref{eq:degreeCredibOrdinal} will be referred to as \nTBI.

\begin{remark}\label{rem:ETRI1et3monotone}
Using Proposition \ref{prop:SrespectsDom}, it is easy to see that \nTB and \nTBI \emph{respect the dominance relation} $\geq$, \ie are monotone \wrt this relation. In particular,  if $y$ dominates  the \acceptable alternative $x$, then $y$ is \acceptable. Symmetrically, if $x$ is \unacceptable and dominates $y$, then $y$ is unacceptable.
\end{remark}

\subsubsection{The family of \TRI methods}\label{ssec:familyETRI}
To ease the reading, we summarize the different variants of the \TRIs methods considered in the sequel as well as their interrelationships. The variants of \nTB appear in Table \ref{ta:variantsTBn} on the left. Their version using only one limiting profile, \ie the different variants of \TB, appear on the right of the same table.

\begin{table}[h!!!]
  \centering
  \begin{tabular}{llcll}
    \hline
    (a) & \nTB with veto & \mbox{ } \quad \mbox{ } & (e)& \TB with veto  \\
    (b) & \nTB without veto & & (f)& \TB without veto\\
    (c) & \nTBI with veto & & (g) & \TBI with veto\\
    (d) & \nTBI without veto  & & (h) & \TBI without veto. \\
    \hline
  \end{tabular}
  \caption{Variants of \nTB and \TB}\label{ta:variantsTBn}
\end{table}

Model (a) contains model (b), which contains model (d). Model (c) (resp. (d)) differs from model (a) (resp. (b)) in that it uses the outranking relation in \Elec~I instead of \Elec~III.

The relationships between the \TB models are the same as between the homologous \nTB models. 
The NonCompensatory Sorting (NCS) model analyzed by \citet{BouyssouMarchant2007:I,BouyssouMarchant2007:II} generalizes (h), while the NCS model with veto generalizes (g). In both cases, the generalization lies in that the winning coalitions of criteria are not necessarily determined by means of additive weights. Model (h) is called the Majority Rule sorting model (MR-Sort) in the literature \citep{LeroyMousseauPirlot2011,Sobrie2018}.

Note that another type of \TRIs methods has been proposed, namely \TC \citep{AlmeidaDiasFigueiraRoy2010} and \nTC \citep{AlmeidaDiasFigueiraRoy2012}. These are based on a different logic (using \emph{central} profiles instead of \emph{limiting} profiles) as analyzed in \citet{BouyssouMarchantEJOR2015}. We shall not consider them in this paper for lack of place.

\subsection{An example of an \nTB model}\label{sse:exampleEtrinB}
Consider alternatives evaluated on three criteria. Each criterion value belongs to the $[0, 10]$ interval. In practice, such evaluations have a limited precision. Let us assume that evaluations are integers or half-integers.

Assume that the alternatives are partitioned into two classes \Aa\ and \Uu\ by an \nTB model with 2 limiting profiles. Let these profiles be $p^1 = (8,7,5)$ and $p^2=(5,6,8)$. The indifference, preference and veto thresholds are, respectively, $qt_i=1$, $pt_i=2$ and $vt_i=4$, the same for all criteria $i=1,2,3$. All criteria have the same weight $w_i=\frac{1}{3}$ and the cutting threshold $\lambda = .6$.

It is readily verified that ${p^1 S p^2}$  and ${p^2 S p^1}$ so that none of the profiles strictly outranks the other.

\subsubsection{Minimally \acceptable alternatives}\label{ssse:precision1half}
Let us apply the above model. The set of possible evaluations for criterion $i$ is  
$X_i = \{0, .5, 1, 1.5, \ldots, 9.5, 10\}$, for $i=1,2,3$, and the set of all possible alternatives is $X=\prod_{i=1}^{3} X_i$. 
Each alternative is thus represented by an evaluation vector: for any $x \in X$, $x=(x_1, x_2, x_3)$, with $x_i = g_i(x)$, $i=1, 2, 3$. Since each alternative is identified with its evaluation vector, the dominance relation $\geq$ on $X$ is asymmetric. Therefore, it is a partial order.

Since \nTB is monotone \wrt the dominance relation $\geq$, which is a partial order on $X$, and since there are finitely many alternatives, the set \Aa\ of \acceptable alternatives has a finite number of minimal elements \As\ that we shall call \emph{minimally \acceptable alternatives} (see Section~\ref{ssec:minInPosets} for further justification). The set \Aa\ is 
the set of alternatives that dominate at least one alternative in \As. It contains $\As$. Decreasing the performance of a minimally \acceptable alternative by any amount on any criterion produces an \unacceptable alternative.

Let us determine the set \As. We first focus on $p^1$. Given the granularity of the evaluations, for satisfying $c(x,p^1)\geq .6$, the index $c_i(x_i, p^1_i)$, which takes only the values 0, .5 and 1,
\begin{itemize}
  \item must be 1 for two criteria $i \in \{1,2,3\}$; it can be 0 for the third one,
  \item or must be 1 for one criterion and take the value .5 on the other two.
\end{itemize}
Consider the alternatives of the form $x=(7,6,x_3)$. For them, $c(x, p^1) \geq \frac{2}{3} > .6$. We have $d_3(x_3, p^1_3) = .75$ if $x_3 = 1.5$ and $d_3(x_3, p^1_3) = .5$ if $x_3 = 2$. Therefore
$\sigma(7,6,1.5) = \frac{2}{3}\times \frac{1/4}{1/3}= .5 < .6$  and  $\sigma(7,6,2) = \frac{2}{3}>.6$ since $d_3(2, p^1_3) = .5 <  c(x,p^1)$. Therefore, $(7,6,2)$ is minimal in $\Aa$ and, by a similar reasoning, we have that $(7,4,4)$ and $(5,6,4)$ are also minimal.

Consider now the second type of minimal alternatives. For example, for $x=(7, 5.5, 3.5)$, we have $\sigma(x,p^1) = c(x,p^1)= 1 \times 1/3 + 1/2 \times 1/3 + 1/2 \times 1/3 = 2/3 > 0.6$. Clearly, none of the performances of $x$ can be decreased by .5 without resulting in an \unacceptable alternative. Therefore, $(7, 5.5, 3.5)$ is minimal and, by a similar reasoning, we see that $(6.5, 6, 3.5)$ and $(6.5,5.5,4)$ are minimal too.

Applying the same analysis to the second profile $p^2$, yields the complete description of the set $\As$ of minimally \acceptable elements displayed in Table \ref{ta:minAccAltEx} (the first (resp. second) row corresponds to profile $p^1$ (resp. $p^2))$.

\begin{table}[h!!!]
  \centering
  \begin{tabular}{cccccc}
  $(7,6,2)$ & $(7,4,4)$ & $(5,6,4)$ & $(7, 5.5, 3.5)$ & $(6.5, 6, 3.5)$ & $(6.5,5.5,4)$ \\
  $(4,5,5)$ & $(4,3,7)$ & $(2,5,7)$ & $(4,4.5,6.5)$ & $(3.5,5,6.5)$ & $(3.5,4.5,7)$. \\
\end{tabular}

  \caption{List of minimally \acceptable alternatives in case evaluations are integers or half-integers}\label{ta:minAccAltEx}
\end{table}

None of these 12 alternatives dominates another. The number of elements in \As\ is thus 12.

\begin{remark}\label{rem:simplerPheno}
Let us briefly discuss the consequences of using a similar \nTBI model, using the \textsc{Electre}~I outranking relation, instead of the more classical version above. We keep the same two limiting profiles $p^1, p^2$ and the same parameters except for $qt_i$ and $pt_i$ that we both set equal to 1 and $vt_i$ that we set to 3. It is easy to see that there are three minimally \acceptable alternatives \wrt $p^1$ which are $(7,6,2)$, $(7,4,4)$ and $(5, 6,4)$. The minimally \acceptable alternatives \wrt $p^2$ are $(4,5,5)$, $(4, 3, 7)$ and $(2, 5, 7)$. The number of minimally \acceptable alternatives is half the one in Table \ref{ta:minAccAltEx}. The minimally \acceptable alternatives in this simplified model are identical to the first three ones in each row of Table \ref{ta:minAccAltEx}. The last three ones in each row are not ``represented'' in the simplified model. They correspond to alternatives for which the distinction between thresholds $pt_i$ and $qt_i$ plays an important role.
\end{remark}


\subsubsection{Observations}\label{sssec:observations}
We emphasize the following observations supported by the above example.
\begin{enumerate}
  \item From the analysis of the above example, it results that an alternative is assigned to \Aa\ by the \nTB model iff it is equal or dominates one of the twelve alternatives listed in Table \ref{ta:minAccAltEx}. Therefore, this model is equivalent to another \nTB model with different parameters. The latter has the 12 alternatives $\Pro' = \{{p'}^{1}, \ldots,  {p'}^{j}, \dots, {p'}^{12}\}$ listed in Table~\ref{ta:minAccAltEx} as limiting profiles. For all $i=1,2,3$, $w_i'=1/3$, $pt_i' = qt_i'=0$ and $vt_i'$ is a large number, \eg $vt_i'=10$. We set $\lambda' = 1$. With this model, $c'(x,{p'}^j) \geq \lambda' =1 $ iff $x_i \geq {p'}_i^{j}$ for all $i$. There is no veto effect since $d_i(x_i, {p'}_i^j) \leq c'(x,{p'}^j)$ whenever the condition $c'(x,{p'}^j) \geq \lambda'$ is fulfilled and whatever the value of $vt'_i$. We call such a model an \emph{unanimous} \nTB model in the sequel.
   \item While the scale $X_i$ of each criterion $i=1, 2, 3$ has 21 levels (all integers and half integers between 0 and 10), only 6 of them are distinguished by appearing as distinct values of the $i$th coordinate in the 12 minimally \acceptable alternatives listed in Table~\ref{ta:minAccAltEx}. The \nTB model distinguishes only the 7 classes of equivalent evaluations that are delimited by these 6 values. For instance, on the scale of criterion $i=1$, the 6 values that make a difference are $7, 6.5, 5, 3, 2.5, 1$. They are the different values taken by the first coordinate of the alternatives in Table~\ref{ta:minAccAltEx}. This means that the model's assignments to \Aa\ or \Uu\ induce a weak order $\succsim_1$ on $X_1$ that is coarser than the natural order on the set of integers and half-integers in $[0,1]$. This weak order $\succsim_1$ (with its asymmetric part denoted $\succ_1$ and its symmetric part $\sim_1$) on $X_1$ is as follows:
       \begin{gather*}
         [10 \sim_1 9.5 \sim_1 9 \sim_1 8.5 \sim_1 8 \sim_1 7.5 \sim_1 7] \succ_1 6.5 \succ_1 [6 \sim_1 5.5 \sim_1 5] \\
        \succ_1 [4.5 \sim_1 4 \sim_1 3.5 \sim_1 3] \succ_1 2.5 \succ_1 [ 2 \sim_1.5  \sim_1 1] \succ_1 [0.5 \sim_1 0].
       \end{gather*}
       This implies, for example, the following. If the evaluation of $x$ on the first criterion is 6, decreasing it to 5 does not change the assignment of the alternative. Such a weak order with 7 equivalence classes is defined on each criterion by the model.
  \item In the process of aiding a decision maker (DM) to make a decision by \emph{eliciting} her preference in a question-and-answer session, \nTB may be a useful tool because the principle of concordance/non-discordance at the root of the method is intuitively appealing. The perspective is different when the parameters of the method are not elicited through actual interaction with a DM but have to be \emph{learned} on the basis of an (often limited) number of assignment examples. The minimal number of examples that allows us to determine a sorting model is important whenever learning the model is the issue. Assume that an oracle tells you that \nTB is the model used by the DM for sorting the alternatives into two categories \Aa\ and \Uu. The oracle gives you the values of all the model's parameters including the number and the definition of limiting profiles. What is the minimal number of assignment questions you have to ask the DM just to verify that the oracle is not cheating on you? The most efficient questioning strategy is asking the decision maker to assign all minimally \acceptable alternatives (that can be determined according to the model indicated by the oracle). If the DM assigns them all to \Aa, then it is still necessary to ask her to assign all maximally \unacceptable alternatives. If the DM assigns them all to \Uu, then the oracle's model is the right one. So, in particular, the number of minimally \acceptable alternatives (\ie the limiting profiles of the unanimous \nTB equivalent model) is important in a learning context. From this point of view, \nTB appears as rather complex since the set of minimally \acceptable alternatives it induces tends to be large. In the above example with 3 criteria, 2 limiting profiles and criteria scales composed of integers and half-integers, this number is 12. It grows rapidly, for instance, with the criteria scales precision. If we apply the same model to the case the criteria scales are rationals with one decimal digit ranging in [0,10] (\ie 101 levels on each criterion scale instead of 21), the number of minimally \acceptable alternatives grows up to 192 (see Supplementary material, Appendix~\ref{app:sec:ExampleDecimal}). Therefore, in a learning perspective, the question of approximating an \nTB model by a simpler one, \ie a model determining relatively few minimally \acceptable alternatives is important.
\end{enumerate}

\subsection{Goal of the paper}\label{ssec:goal}

In Sections \ref{se:Notation} and \ref{sec:model:m}, we analyze, in a conjoint measurement framework \citep[][Ch.~6 and~7]{KrantzLuceSuppesTversky71Vol1}, an assignment model, Model $\M$, that is closely related  to the \nTBI method presented above. Just as \nTB generalizes \TB, Model $\M$ generalizes the noncompensatory sorting model studied by \citet{BouyssouMarchant2007:I,BouyssouMarchant2007:II} to the case in which several limiting profiles are used to sort the alternatives.

We place ourselves in a conjoint measurement framework because it is the usual one in decision theory and it has been used in previous works analyzing the \textsc{Electre} methods. Analyzing sorting methods in this framework means that any alternative in a Cartesian product can be sorted into categories and that an a priori linear ordering of each criterion scale is not postulated. A weak order on each criterion scale, if it exists, will be revealed by the partition. Working in such a framework does not restrict the generality of the study. Indeed, in case each criterion scale is linearly ordered and a partition respects the dominance relation determined by these orders, then the partition does reveal a weak order on each scale, possibly coarser than the a priori linear orders, but compatible with them. This was illustrated in item~2 of Section~\ref{sssec:observations}.

Our main finding is that, if the number of limiting profiles is not bounded above, the axiomatic analysis of Model $\M$ is easy and rests
on a condition, linearity, that is familiar in the analysis of sorting models
\citep{Goldstein91JMP,BouyssouMarchant2007:I,BouyssouMarchant2007:II,BouyssouMarchant2010EJORadditive,GMS04EJOR,
GMS02Control,GrecoMatarazzoSlowinski01Colorni}.
Our simple result shows the equivalence
between Model $\M$ and several other sorting models, in particular, the unanimous model introduced above in the example.

We prove, in Section \ref{ssec:EvsEtrinBpc}, that the \nTB model, which uses the \textsc{Electre}~III outranking relation, is \emph{equivalent} to the \nTBI model and to the Decomposable model \eqref{def:model:D} (that will be defined below, in Section~\ref{subsec:Goldstein}). By ``equivalent'', we mean that, for all particular \nTB model, there is an \nTBI model that determines the same partition. The parameters of these equivalent models are possibly different. In particular, we emphasize that the sets of limiting profiles used in equivalent models usually differ, also in cardinality.
Our theoretical analysis gives insight into the issue of learning such models on the basis of assignment examples (see Section \ref{ssec:learning}).

\section{Notation and framework}\label{se:Notation}

Although the analyses presented in this paper can easily be extended to cover
the case of several ordered categories, we will mostly limit ourselves
to the study of the case of \emph{two} ordered categories.
This will allow us to keep things simple, while
giving us a sufficiently rich framework to present our main points.

Similarly, we suppose throughout that the set of objects
to be sorted
is \emph{finite}. This is hardly a limitation with applications of sorting methods in mind.
The extension to the general case is not difficult but
calls for developments that would obscure
our main messages\,%
\footnote{In fact our framework allows us to deal with some infinite sets of objects:
all that is really required is that the set of equivalence classes of each set
$\Set_{i}$ under the equivalence $\esi$ is finite, see below.}.

\subsection{The setting}
Let $n\geq 2$ be an integer and
$\Set = \Set_{1} \times \Set_{2} \times \cdots \times \Set_{n}$ be
a \emph{finite} set of objects\footnote{Note that, in contrast with Section~\ref{se:motivation}, the sets $X_i$ are not necessarily sets of real numbers. They also need not be the range of a function $g_i$ evaluating the alternatives \wrt criterion $i$. The set $X_i$ can be any finite set, not necessarily ordered \emph{a priori}. \label{foot:Xi}}.
Elements $x, y, z,\ldots$ of $\Set$ will be interpreted
as alternatives evaluated on a set
$\N = \ens{1, 2, \ldots, n}$ of attributes\,%
\footnote{We use a standard vocabulary for binary relations. For the convenience of the reader,
all terms that are used in the main text are defined in Appendix~\ref{app:binary},
given as supplementary material. See also,
\eg \citet{AleskerovBouyssouMonjardet07Book,DoignonMonjardetRoubensVincke88JMP,PirlotVincke92,RoubensVincke85Book}.}.
%
Any element $x \in X$ is thus an $n$-dimensional vector $x=(x_1, \ldots, x_i, \ldots, x_n)$, with $x_i \in X_i$, for all $i \in N$. For all $x, y \in X$ and $i \in N$, we denote by $(x_i, y_{-i})$ the element $w \in X$ such that $w_i = x_i$ and, for all $j \neq i$, $w_j = y_j$. In other words, $w=(x_i, y_{-i})$ is obtained by replacing the $i$th component of $y$, \ie $y_i$, by $x_i$.


Our primitives consist in a twofold \emph{partition} $\PART$ of the set $\Set$. This means that
the sets $\Aa$ and $\Uu$ are nonempty and disjoint and that their union is the entire
set $\Set$. Our central aim is to study various models allowing to represent the information
contained in $\PART$.
We interpret the partition $\PART$ as the result of
a sorting method applied to the alternatives in $\Set$.
Although the ordering of the categories is not part of our primitives,
it is useful to interpret the set $\Aa$ as containing \Acceptable  objects,
while $\Uu$ contains \Unacceptable ones.

We say that an attribute $i \in \N$ is influential for $\PART$ if
there are $x_{i}, y_{i}\in X_{i}$ and $a_{-i}\in \Set_{-i}$ such that
$(x_{i}, a_{-i}) \in \Aa$ and $(y_{i}, a_{-i}) \in \Uu$.
We say that an attribute is degenerate if it is not influential.
Note that the fact that $\PART$ is a partition implies that there is at least one influential
attribute in $\N$. A degenerate attribute has no influence whatsoever
on the sorting of the alternatives and may be suppressed from $\N$.
Hence, we suppose henceforth that all attributes are influential for $\PART$.


A twofold partition $\PART$ induces on each $i \in \N$ a binary relation defined letting, for all $i \in \N$ and all
$x_{i}, y_{i} \in X_{i}$,
\begin{equation}\label{eq:sim_i}
x_{i} \esi y_{i} \txif
\big[\forall a_{-i} \in X_{-i}, (y_{i},a_{-i}) \in \Aa \iff (x_{i},a_{-i}) \in \Aa\big].
\end{equation}
This relation is always reflexive, symmetric and transitive, \ie is an equivalence.
We omit the simple proof of the following \citep[see][Lemma~1, p.~220]{BouyssouMarchant2007:I}.
\begin{lemma}\label{lemma1:Gold}
For all $x, y \in \Set$ and all $i \in \N$,
\begin{enumerate}
\item\label{it1:lemma1:Gold}
$[y \in \Aa$ and $x_{i}\esi y_{i}]$ $\Rightarrow$ $(x_{i}, y_{-i}) \in \Aa$,
\item\label{it2:lemma1:Gold}
$[x_{j} \essi_{j} y_{j}$, for all $j \in N]$ $\Rightarrow$ $[x \in \Aa \Leftrightarrow y \in \Aa]$.
\end{enumerate}
\end{lemma}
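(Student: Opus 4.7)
The first part is essentially a direct unpacking of the defining property \eqref{eq:sim_i}. My plan is to instantiate the universal quantifier over $a_{-i} \in \Set_{-i}$ at the specific choice $a_{-i} = y_{-i}$. Since $y = (y_i, y_{-i}) \in \Aa$ by hypothesis and $x_i \esi y_i$, the biconditional built into the definition of $\esi$ immediately yields $(x_i, y_{-i}) \in \Aa$. Nothing more is needed, and there is no obstacle.

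For the second part, my plan is to iterate Part~1 coordinate by coordinate, exploiting the symmetry of each $\es[j]$. Assume $x \in \Aa$; I will show $y \in \Aa$ (the reverse implication follows by exchanging the roles of $x$ and $y$). I would set $z^{(0)} = x$ and, for each $j = 1, \ldots, n$, define $z^{(j)}$ to be the alternative obtained from $z^{(j-1)}$ by replacing its $j$-th component by $y_j$, so that $z^{(n)} = y$. Then I would run an induction on $j$: assuming $z^{(j-1)} \in \Aa$, I apply Part~1 with attribute $j$, taking the ``$y$'' of Part~1 to be $z^{(j-1)}$ and the ``$x_i$'' of Part~1 to be $y_j$. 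The equivalence needed, namely $y_j \es[j] z^{(j-1)}_j = y_j \es[j] x_j$, follows from the hypothesis $x_j \es[j] y_j$ and the symmetry of $\es[j]$. Part~1 then yields $(y_j, z^{(j-1)}_{-j}) = z^{(j)} \in \Aa$, completing the induction step. After $n$ steps I obtain $y = z^{(n)} \in \Aa$.

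There is no genuine obstacle here: the statement is a formal consequence of the coordinatewise definition of $\esi$. The only care required is the bookkeeping in the chain $z^{(0)}, z^{(1)}, \ldots, z^{(n)}$, and the observation that each $\es[j]$ is symmetric so that Part~1 can be applied in the direction of replacing $x_j$ by $y_j$. Since these verifications are entirely routine, the whole proof should fit in a few lines.
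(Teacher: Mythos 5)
Your proof is correct: Part~1 is indeed just the instantiation of the universal quantifier in \eqref{eq:sim_i} at $a_{-i}=y_{-i}$, and Part~2 follows by the standard coordinate-by-coordinate replacement chain using the symmetry of each $\es[j]$. The paper omits this proof as ``simple'' and refers to \citet[Lemma~1, p.~220]{BouyssouMarchant2007:I}, where essentially the same argument is given, so your approach matches the intended one.
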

This lemma will be used to justify the convention made later
in Section~\ref{sub:model:m}.


\subsection{A general measurement framework}\label{subsec:Goldstein}
\citet{Goldstein91JMP} suggested the use of conjoint measurement techniques
for the analysis of twofold and threefold partitions of a set of multi-attributed alternatives.
His analysis was  rediscovered and  developed
in \citet{GrecoMatarazzoSlowinski01Colorni} and \citet{GMS02Control}.
We briefly recall here the main points of the analysis in the above papers
for the case of twofold partitions. We follow 
\citet{BouyssouMarchant2007:I}.

Let $\PART$ be a partition of $\Set$.
Consider a measurement model, henceforth the \emph{Decomposable} model, in which, for all $x \in \Set$,
\begin{equation}\label{def:model:D}\tag{$D1$}
x \in \Aa \Leftrightarrow F(u_{1}(x_{1}), u_{2}(x_{2}), \ldots, u_{n}(x_{n})) > 0,
\end{equation}
where $u_{i}$ is a real-valued function on $\Seti$ and
$F$ is a real-valued function on $\prod_{i=1}^{n}u_{i}(\Seti)$
that is \emph{nondecreasing}
in each argument\,%
\footnote{In Model~\eqref{def:model:D}, notice that we could have chosen
to replace the strict inequality by a nonstrict one.
The two versions of the model are equivalent, as shown in
\citet[Rem.~8, p.~222]{BouyssouMarchant2007:I}. The same is true for Model $(D2)$.}.
The special case of Model \eqref{def:model:D} in which
$F$ is supposed to be \emph{increasing} in each argument, is called Model $(D2)$.
%
%
Model $(D2)$ contains as a particular case the additive model
for sorting in which, for all $x \in \Set$,
\begin{equation}\label{def:model:Add}\tag{$Add$}
x \in \Aa \Leftrightarrow \sum_{i=1}^{n}u_{i}(x_{i}) > 0,
\end{equation}
that is at the heart of the UTADIS technique
\citep{Jacquet95UTADIS} 
and its variants \citep{Zopounidis00MHDIS,Zopounidis00PREFDIS,GrecoMousseauSlowinski2010UTADIS}.
It is easy to check\,%
\footnote{When $\Set$ is finite, it is clear that the variant of
Model~\eqref{def:model:Add} in which the strict inequality is replaced
by a nonstrict one is equivalent to
Model~\eqref{def:model:Add}.}
that there are twofold partitions that can be obtained in Model
$(D2)$ but that cannot be obtained in Model \eqref{def:model:Add} (see Supplementary material, Appendix~\ref{app:AddNotEqDecomp}).

In order to analyze Model \eqref{def:model:D},
we define on each $\Seti$ the binary relation $\relsi$ letting,
for all $x_{i}, y_{i}\in \Seti$,
\begin{equation}\label{eq:trace}
x_{i} \relsi y_{i} \txif [\mbox{for all } a_{-i} \in \Set_{-i},
(y_{i}, a_{-i}) \in \Aa \Rightarrow (x_{i}, a_{-i}) \in \Aa].
\end{equation}
It is not difficult to see that, by construction,
$\relsi$ is reflexive and transitive.
We denote by $\asi$ (resp.\ $\esi$) the asymmetric (resp.\ symmetric) part of $\relsi$
(hence, the relation $\esi$ coincides with the one defined by \eqref{eq:sim_i}).

We say that the partition $\PART$ is \linear on attribute $i \in \N$ (condition \lineari)
if, for all $x_{i}, y_{i} \in \Seti$ and all $a_{-i}, b_{-i}\in \Set_{-i}$,
\begin{equation}\label{axiome:Linear:Gold}\tag{\lineari}
\left.
\begin{array}{c}
(x_{i}, a_{-i}) \in \Aa\\
\mbox{and}\\
(y_{i}, b_{-i}) \in \Aa
\end{array}
\right\}
\Rightarrow
\left\{
\begin{array}{c}
(y_{i}, a_{-i}) \in \Aa,\\
\mbox{or}\\
(x_{i}, b_{-i}) \in \Aa.
\end{array}
\right.
\end{equation}
The partition is said to be \linear if it is \lineari, for all $i \in \N$.
This condition was first proposed in \citet{Goldstein91JMP}, under the name ``context-independence'',   and generalized
in \citet{GrecoMatarazzoSlowinski01Colorni} and \citet{GMS02Control} (these authors call it ``cancellation property'').
The adaptation of this condition to the study of binary relations, adaptation
first suggested by
\citet{Goldstein91JMP}, is central in the analysis of the nontransitive
decomposable models
presented in \citet{BouyssouPirlot99Meskens,BouyssouPirlot02NTDCM,BouyssouPirlot02Intra}.

The following lemma takes note of the consequences of condition \lineari
on the relation $\relsi$ and shows that linearity is necessary for Model \eqref{def:model:D}.
Its proof can be found in \citet[Lemma~5, p.~221]{BouyssouMarchant2007:I}.
\begin{lemma}\label{lemma:rels:complete:Gold}
\begin{enumerate}
\item\label{it1:lemma:rels:complete:Gold}
Condition \lineari holds iff $\relsi$ is complete,
\item\label{it2:lemma:rels:complete:Gold}
If a partition $\PART$ has a representation in Model \eqref{def:model:D} then it is \linear.
\end{enumerate}
\end{lemma}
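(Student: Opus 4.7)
The plan is to attack the two parts separately, with part 1 being an equivalence between a structural property of the induced preference and a cancellation-type axiom, and part 2 being a direct monotonicity argument on the representation.

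For part 1, I would prove the two implications by contraposition/direct argument at the level of the definition of $\relsi$. First, assume $\relsi$ is complete and pick $x_i, y_i \in \Seti$ and $a_{-i}, b_{-i} \in \Set_{-i}$ with $(x_i, a_{-i}) \in \Aa$ and $(y_i, b_{-i}) \in \Aa$. By completeness, either $x_i \relsi y_i$ or $y_i \relsi x_i$; in the first case the definition of $\relsi$ applied to $(y_i, b_{-i}) \in \Aa$ gives $(x_i, b_{-i}) \in \Aa$, and in the second case the definition applied to $(x_i, a_{-i}) \in \Aa$ gives $(y_i, a_{-i}) \in \Aa$. Either disjunct of the conclusion of \lineari is reached. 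Conversely, suppose $\relsi$ is not complete, so there exist $x_i, y_i$ such that $\Not{x_i \relsi y_i}$ and $\Not{y_i \relsi x_i}$. Unwinding the definition, this produces $a_{-i}$ with $(y_i, a_{-i}) \in \Aa$ and $(x_i, a_{-i}) \notin \Aa$, and $b_{-i}$ with $(x_i, b_{-i}) \in \Aa$ and $(y_i, b_{-i}) \notin \Aa$. Plugging these four facts into \lineari yields a contradiction, so \lineari must fail.

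For part 2, I would assume we have a Decomposable representation with functions $u_1, \ldots, u_n$ and a nondecreasing aggregator $F$, and derive \lineari. Fix $i \in N$ and suppose $(x_i, a_{-i}) \in \Aa$ and $(y_i, b_{-i}) \in \Aa$, so that
\begin{equation*}
F\bigl(u_i(x_i), (u_j(a_j))_{j\neq i}\bigr) > 0 \quad \text{and} \quad F\bigl(u_i(y_i), (u_j(b_j))_{j\neq i}\bigr) > 0.
\end{equation*}
Since $u_i(x_i)$ and $u_i(y_i)$ are real numbers, one of them dominates the other. If $u_i(x_i) \geq u_i(y_i)$, then monotonicity of $F$ in its $i$-th argument gives $F(u_i(x_i), (u_j(b_j))_{j\neq i}) \geq F(u_i(y_i), (u_j(b_j))_{j\neq i}) > 0$, so $(x_i, b_{-i}) \in \Aa$. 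If $u_i(y_i) \geq u_i(x_i)$, the symmetric argument on $(x_i, a_{-i})$ gives $(y_i, a_{-i}) \in \Aa$. In both cases the conclusion of \lineari holds, and since $i$ was arbitrary the partition is \linear.

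There is no real obstacle here: both arguments are one-line manipulations once the definitions of $\relsi$, \lineari, and \eqref{def:model:D} are unfolded. The only thing to keep an eye on is being careful that, in part 1, the two witnesses $a_{-i}$ and $b_{-i}$ produced by the failure of completeness are exactly the two tuples on which \lineari is invoked, and in part 2 that the nondecreasing (rather than increasing) character of $F$ is enough, since we only need a weak inequality to preserve strict positivity.
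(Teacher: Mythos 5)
Your proof is correct and is essentially the standard argument: the paper itself omits the proof and refers to Lemma~5 of \citet{BouyssouMarchant2007:I}, where both parts are established exactly as you do — part 1 by matching the two disjuncts of \lineari{} with the two cases of completeness of $\relsi$ (and, for the converse, extracting the two witnesses $a_{-i}$, $b_{-i}$ from the failure of completeness), and part 2 by comparing $u_i(x_i)$ and $u_i(y_i)$ and using that $F$ nondecreasing preserves the strict inequality. No gaps; your closing remarks about the correct instantiation of \lineari{} and about nondecreasingness sufficing are exactly the right points to watch.
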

%

The following proposition is due to \citet[Theorem 2]{Goldstein91JMP} and
\citet[Theorem 2.1, Part 2]{GrecoMatarazzoSlowinski01Colorni}.

\begin{proposition}\label{prop:Goldstein:monotone}
Let $\PART$ be a twofold partition of a set $\Set$. Then:
\begin{enumerate}[(i)]
\item there is a representation of $\PART$
in Model \eqref{def:model:D} iff it is \linear,
\item if $\PART$ has a representation in Model \eqref{def:model:D},
it has a representation in which, for all
$i \in \N$, $u_{i}$ is a numerical representation of $\relsi$,
\item moreover, Models \eqref{def:model:D} and $(D2)$ are equivalent.
\end{enumerate}
\end{proposition}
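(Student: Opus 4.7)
The plan is to handle the three parts in sequence. Part~(i) necessity is immediate from Lemma~\ref{lemma:rels:complete:Gold}.(\ref{it2:lemma:rels:complete:Gold}). For sufficiency, assume $\PART$ is \linear. By Lemma~\ref{lemma:rels:complete:Gold}.(\ref{it1:lemma:rels:complete:Gold}), each $\relsi$ is complete; together with reflexivity and transitivity (which hold by construction), $\relsi$ is a complete preorder on $\Seti$. Since $\Seti$ has finitely many $\esi$-classes, I pick, for each $i \in \N$, a numerical representation $u_{i}: \Seti \to \RR$ of $\relsi$.

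I then define $F$ on $\prod_{i \in \N} u_{i}(\Seti)$ as follows: for $t = (t_1,\ldots,t_n)$, pick any $x \in \Set$ with $u_{i}(x_{i}) = t_{i}$ for all $i$, and set $F(t) = 1$ if $x \in \Aa$ and $F(t) = -1$ otherwise. Lemma~\ref{lemma1:Gold}.(\ref{it2:lemma1:Gold}) ensures $F$ is well-defined (two preimages of $t$ share their $\esi$-class on every coordinate, hence belong to the same category). For nondecreasingness, take $t, t'$ with $t_i \geq t'_i$ for all $i$ and pick preimages $x, x'$. Then $x_{i} \relsi x'_{i}$ for each $i$; applying the definition of $\relsi$ one coordinate at a time shows that $x' \in \Aa$ entails $x \in \Aa$, so $F(t) \geq F(t')$. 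This settles~(i), and it also proves~(ii) as a by-product, since $u_{i}$ is by construction a numerical representation of $\relsi$.

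For~(iii), Model $(D2)$ trivially entails Model~\eqref{def:model:D}. For the converse, let $u_{i}, F$ realize \eqref{def:model:D}. I would upgrade $F$ to a strictly increasing $F'$ while preserving the partition via a perturbation trick: let $H(t) = M$ if $F(t) > 0$ and $H(t) = -M$ otherwise, and set $F'(t) = H(t) + \epsilon \sum_{i \in \N} t_{i}$. Since $\prod_{i} u_{i}(\Seti)$ is finite, fixing any $\epsilon > 0$ and then choosing $M > \epsilon \max_t \left| \sum_{i} t_{i} \right|$ ensures $F'(t) > 0 \iff F(t) > 0$, so the partition is preserved. Moreover, if $t, t'$ differ only on coordinate $j$ with $t_j > t'_j$, then $H(t) \geq H(t')$ (since $H$ inherits monotonicity from $F$) and $\epsilon(t_j - t'_j) > 0$, so $F'(t) > F'(t')$, as required. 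The only real subtlety lies in this calibration of $\epsilon$ and $M$ in part~(iii); it succeeds precisely because the finiteness of $\Set$ keeps the range of $\sum_{i} t_{i}$ bounded, allowing the step size of $H$ to dominate the perturbation on every level set while being dominated by it within each level set.
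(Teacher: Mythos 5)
Your proof is correct. Note that the paper does not actually spell out a proof of this proposition: it simply defers to \citet[Proposition~6, p.~222]{BouyssouMarchant2007:I}, so there is no in-text argument to compare against. Your construction is the standard one for parts~(i) and~(ii): take $u_{i}$ representing the weak order $\relsi$ (which exists because each $\Seti$ has finitely many $\esi$-classes), define $F$ by the category of any preimage, and use Lemma~\ref{lemma1:Gold} for well-definedness and the coordinate-by-coordinate application of \eqref{eq:trace} for monotonicity. For part~(iii) your additive perturbation $F'(t)=H(t)+\epsilon\sum_{i}t_{i}$, with the step $M$ calibrated against the bounded range of $\sum_{i}t_{i}$, is a legitimate alternative to the device used in the cited reference (which takes $F$ equal to $\exp(\sum_{i}u_{i}(x_{i}))$ on $\Aa$ and $-\exp(-\sum_{i}u_{i}(x_{i}))$ on $\Uu$, thereby getting strict increasingness directly without a case split); both exploit the same idea that the category term dominates across level sets while the sum term breaks ties within them, and yours leans on finiteness of $\Set$, which the paper assumes throughout. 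No gaps.
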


\begin{proof}
See, \eg \citet[Proposition~6, p.~222]{BouyssouMarchant2007:I}
%
\end{proof}

\subsection{Partitions respecting a dominance relation}\label{ssec:partRespDom}
Footnote~\ref{foot:Xi} has emphasized that it is not necessarily the case that $X_i$ is a subset of the reals and the range of a function $g_i$ evaluating the alternatives \wrt criterion $i$, for all $i \in N$.
In the case a partition respects the dominance relation determined by pre-existing linear orderings of the criteria scales (see Remark~\ref{rem:ETRI1et3monotone}, for a definition), we note the following result.

\begin{proposition}\label{prop:partRespectsDom}
Let $X = \prod_{i=1}^{n} X_i$, where the finite set $X_i$ is endowed with a linear order $\geq_i$, for all $i$. Let $\PART$ be a twofold partition of $X$ which respects the dominance relation $\geq$ determined by the linear orders $\geq_i$. Then $\PART$ is linear and the weak order $\succsim_i$ induced by the partition is compatible with the linear order $\geq_i$, for all $i$, \ie for all $x_i, y_i \in X_i$, $x_i \geq_i
y_i$ entails $x_i \succsim_i y_i$.
\end{proposition}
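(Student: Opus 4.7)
The plan is to handle the compatibility claim first, and then to derive linearity as an easy corollary of compatibility together with Lemma~\ref{lemma:rels:complete:Gold}.

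For compatibility, I would argue directly from the definition of $\relsi$ in \eqref{eq:trace}. Suppose $x_i \geq_i y_i$, and take any $a_{-i} \in X_{-i}$ with $(y_i, a_{-i}) \in \Aa$. Because $x_i \geq_i y_i$ (and trivially $a_j \geq_j a_j$ for all $j \neq i$), we have $(x_i, a_{-i}) \geq (y_i, a_{-i})$ in the dominance relation $\geq$. Since $\PART$ respects $\geq$, it follows that $(x_i, a_{-i}) \in \Aa$. This shows $x_i \relsi y_i$, i.e.\ $x_i \succsim_i y_i$, which is exactly the compatibility claim.

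To obtain linearity, I would invoke the linearity of each $\geq_i$. Fix an attribute $i \in N$ and take any $x_i, y_i \in X_i$. Since $\geq_i$ is a linear order, either $x_i \geq_i y_i$ or $y_i \geq_i x_i$. By the compatibility just established, this yields either $x_i \succsim_i y_i$ or $y_i \succsim_i x_i$, so $\relsi = \succsim_i$ is complete. By part~\ref{it1:lemma:rels:complete:Gold} of Lemma~\ref{lemma:rels:complete:Gold}, condition \lineari holds. Since this holds for every $i \in N$, the partition $\PART$ is linear.

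There is no real obstacle here: the whole argument rests on unfolding the definitions of ``respecting dominance'' and of $\relsi$, and then using that a linear order is complete together with the already-established equivalence between completeness of $\relsi$ and \lineari. The only minor point to keep in mind is that Lemma~\ref{lemma1:Gold} ensures $\relsi$ is reflexive and transitive independently of any dominance assumption, so once completeness is secured, $\succsim_i$ is automatically a weak order.
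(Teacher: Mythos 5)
Your proof is correct and follows essentially the same route as the paper: establish compatibility of $\succsim_i$ with $\geq_i$ directly from the definition of respecting dominance and \eqref{eq:trace}, then deduce completeness of $\succsim_i$ from that of $\geq_i$, and conclude linearity via part~\ref{it1:lemma:rels:complete:Gold} of Lemma~\ref{lemma:rels:complete:Gold}. You merely spell out the dominance step $(x_i,a_{-i}) \geq (y_i,a_{-i})$ more explicitly than the paper does.
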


\begin{proof}
If $x_i \geq_i y_i$, condition \eqref{eq:trace} is fulfilled, since the partition respects dominance, and therefore $x_i \succsim_i y_i$. Since $\geq_i$ is complete, so is $\succsim_i$, for all $i$. Therefore $\PART$ is linear (Lemma \ref{lemma:rels:complete:Gold}.1).
\end{proof}

The fact that, in general, $\succsim_i$ does not distinguish (\ie considers as equivalent) some pairs that are strictly ordered by $\geq_i$ is illustrated in Section \ref{sssec:observations}, item~2.

We noticed in Remark~\ref{rem:ETRI1et3monotone}, that \nTBpc and \nTBIpc respect the dominance relation. Therefore, we have the following corollary of Proposition~\ref{prop:partRespectsDom}.
\begin{corollary}\label{cor:ETRI1et3linear}
The twofold partitions determined by \nTBpc and \nTBIpc are linear.
\end{corollary}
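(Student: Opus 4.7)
The plan is to observe that the corollary follows almost immediately by chaining two results already established in the excerpt. First, I would note that both \nTBpc and \nTBIpc are defined on a Cartesian product $X = \prod_{i=1}^{n} X_{i}$ where each $X_{i}$ is (a subset of) the reals, namely the range of the criterion function $g_{i}$, and hence carries a natural linear order $\geq_{i}$. The dominance relation $\geq$ on $X$ induced by these $\geq_{i}$ is exactly the one discussed in footnote~\ref{foot:dominance} and invoked in Remark~\ref{rem:ETRI1et3monotone}.

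Next, I would invoke Remark~\ref{rem:ETRI1et3monotone}, which records (via Proposition~\ref{prop:SrespectsDom}) that both \nTB and \nTBI are monotone with respect to $\geq$: if $y$ dominates an \acceptable alternative $x$, then $y$ is \acceptable, and symmetrically for \unacceptable alternatives. In our terminology, this means precisely that the partition $\PART$ determined by \nTBpc (respectively \nTBIpc) respects the dominance relation $\geq$.

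Finally, I would apply Proposition~\ref{prop:partRespectsDom} to these partitions: it asserts that any twofold partition of $X$ that respects the dominance relation coming from linear orders on each coordinate is linear. This immediately yields linearity of the partitions determined by \nTBpc and \nTBIpc, which is the claim of the corollary.

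There is essentially no obstacle; the whole argument is a two-line chain. The only thing to be careful about is to make explicit that, in the setting of Section~\ref{se:motivation}, each $X_{i}$ is indeed linearly ordered (which is implicit in the fact that $g_{i}$ takes real values), so that Proposition~\ref{prop:partRespectsDom} is applicable. This contrasts with the more general framework of Section~\ref{se:Notation}, where attribute sets need not be ordered a priori, as emphasized in footnote~\ref{foot:Xi}.
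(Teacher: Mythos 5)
Your argument is exactly the paper's: the corollary is stated as an immediate consequence of Remark~\ref{rem:ETRI1et3monotone} (that \nTBpc and \nTBIpc respect the dominance relation, via Proposition~\ref{prop:SrespectsDom}) combined with Proposition~\ref{prop:partRespectsDom} (partitions respecting a dominance relation induced by linear orders on the coordinates are linear). Your additional remark that the $X_i$ are linearly ordered in this setting, because they are ranges of real-valued criterion functions, is a correct and worthwhile point of care that the paper handles by the same assumption.
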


\subsection{Interpretations of the Decomposable model \eqref{def:model:D}}\label{ssubsec:D1:int}
The framework offered by the Decomposable model \eqref{def:model:D} is quite flexible.
It contains many other sorting models
as particular cases. 
We already observed that it contains Model \eqref{def:model:Add} as a particular case.
\citet{BouyssouMarchant2007:I} have reviewed various possible interpretations of
Model \eqref{def:model:D}. They have shown that
both the pseudo-conjunctive and the pseudo-disjunctive
variants  of \TBI (see Table~\ref{ta:variantsTBn}) enter into this framework. In particular, they have characterized, within the Decomposable model, the NCS model, which is a generalization (without additive weights) of \TBI (pseudo-conjunctive).

\citet[Theorem 2.1, Parts 3 and 4]{GrecoMatarazzoSlowinski01Colorni}
\citep[see also][Theorem 2.1]{GMS02Control}
have proposed two equivalent
reformulations of the Decomposable model \eqref{def:model:D}.
The first one  uses ``at least'' decision rules. The second one
uses a binary relation to compare alternatives to a profile.
We refer to \citet{BouyssouMarchant2007:I} and to the original papers for details.

\section{Main Results}\label{sec:model:m}
\subsection{Definitions}\label{sub:model:m}
The following definition synthesizes the main features of \nTBIpc, the version of \nTBpc using the \textsc{Electre}~I outranking relation (see  Section~\ref{ssec:etrib_nB}).
The main differences \wrt \nTBIpc are that:
\emph{(i)} we do not suppose that the real-valued functions $g_{i}$ are given beforehand and
\emph{(ii)} we do not use additive weights combined with a threshold to determine the
winning coalitions.
Actually, the model defined below is a \emph{multi-profile version of the noncompensatory sorting} model with veto analyzed in \citet{BouyssouMarchant2007:I}, exactly in the same way as \nTB is a multi-profile version of \TB.  For notational simplicity, we shall refer to it as Model $\M$ (``\emph{E}'', for \TRI) in the sequel.  

\begin{definition}\label{def:model:M}
We say that a partition
$\PART$ has a representation in Model $\M$ if:
\begin{itemize}
\item for all $i \in \N$, there is a semiorder
$\rSi$ on $X_{i}$ (with asymmetric part
$\rPi$ and symmetric part $\rIi$),

\item for all $i \in \N$, there is a strict semiorder $\rVi$
on $X_{i}$ that is included in $\rPi$ and is the asymmetric part of a semiorder
$\rUi$,

\item $(\rSi, \rUi)$ is a \emph{homogeneous} nested chain of semiorders
and ${\rWi} = {\rS_{i}^{wo}} \cap {\rU_{i}^{wo}}$ is a weak order
that is compatible with both
$\rSi$ and $\rUi$,

\item there is a set of subsets of attributes  $\F \subseteq 2^{N}$ such that,
for all $I, J \in 2^{N}$, $[I \in \F$ and $I \subseteq J]$
$\Rightarrow J \in \F$,

\item there is a binary relation $\rS$ on $X$ (with
symmetric part $\rI$ and asymmetric part $\rP$) defined
by
\begin{equation*}
x \rS y \iff \left[S(x,y) \in \F \text{ and } V(y,x) = \vide\right],
\end{equation*}

\item there is a set $\Pro = \{p^{1}, \ldots , p^{k}\} \subseteq X$ of $k$
limiting profiles, 
such that for all $p, q \in \Pro$, $\Not{p \rP q}$,
\end{itemize}

such that

\begin{equation}\label{eq:M}\tag{\Ms}
x \in \Aa  \iff
\begin{cases}
x \rS p    &\text{ for some } p \in \Pro \quad \text{and}\\
\Not{q \rP x}  &\text{ for all } q \in \Pro,
\end{cases}
\end{equation}
where
\begin{equation*}
S(x, y) = \{i \in \N : x_{i} \rSi y_{i}\},
\end{equation*}
and
\begin{equation*}
V(x, y) = \{i \in \N : x_{i} \rVi y_{i}\}.
\end{equation*}

We then say that $\langle (\rSi, \rVi)_{i\in\N}, \F, \Pro\rangle$
is a representation of $\PART$ in Model $\M$.
Model $(\Ms^{c})$ is the particular case of Model $\M$,
in which there is a representation that
shows no discordance effects, \ie
in which all relations $\rVi$ are empty.
Model $(\Ms^{u})$ is the particular case of Model $\M$, in which there is a representation
that requires unanimity, \ie such that $\F = \{N\}$.
\end{definition}

\begin{table}[hbt]
\centering
\begin{tabular}{lll}
\toprule
$\M$        & $\langle (\rSi, \rVi)_{i\in\N}, \F, \Pro\rangle$             & General model  \\
$(\Ms^{c})$   & $\langle (\rSi, \vide)_{i\in\N}, \F, \Pro\rangle$          & Based on concordance \\
$(\Ms^{u})$   & $\langle (\rSi, \vide)_{i\in\N}, \F = \{\N\}, \Pro\rangle$ & Based on unanimity \\
\bottomrule
\end{tabular}
\caption{Model $\M$ and its variants.\label{table:model:m}}
\end{table}

Table~\ref{table:model:m} summarizes the models defined above.
It should be clear that $\M$ is closely related to \nTBIpc (see Remark \ref{rem:simpleretrinB} below).
It does not use
criteria but uses attributes, as is traditional in conjoint measurement.
Moreover, it does not use an additive weighting
scheme combined with a threshold to determine winning coalitions but
uses instead a general family $\F$ of subsets of attributes
that is compatible with inclusion
\citep[see also][]{BouyssouMarchant2007:I}. Note that, when the set of limiting profiles $\Pro$ is restricted to be a singleton, Model $\M$ is \emph{exactly} the noncompensatory sorting model (NCS) studied by \citet{BouyssouMarchant2007:I}.

\begin{remark}\label{rem:simpleretrinB}
Any partition determined by an \nTBIpc model has a representation in Model $\M$. We illustrate this fact using the example of the \nTBIpc model described in Remark~\ref{rem:simplerPheno}. Note that this way of constructing a representation in Model $\M$ is applicable to any \nTBIpc model.

For all $i=1, 2, 3$, $X_i$ is the set of integers and half-integers between 0 and 10. The semiorder $S_i$ on $X_i$ is defined using the threshold $pt_i = qt_i =1$, \ie for all $x_i, y_i \in X_i$, $x_i S_i y_i$ iff $x_i \geq y_i - 1$. We have $x_i S_i y_i$ iff $c_i(x_i,y_i) =1$. Similarly, the strict semiorder $V_i$ is defined using the threshold $vt_i =3$ by $x_i V_i y_i$ iff $x_i > y_i + 3$. We have $x_i V_i y_i$ iff $d_i(y_i, x_i) = 1$. The subsets of attributes in $\F$ are all sets of two or three attributes since $c(x,y) \geq .6$ if and only if, for at least two criteria, $x_i S_i y_i$ and, therefore, $|S(x,y)| \geq 2$. The pair $(x,y)$ belongs to the outranking relation $S$ iff $|S(x,y)| = |\{i: x_i S_i y_i\}| \geq 2$ and $V(y,x) = \{i: y_i V_i x_i\} = \emptyset$, \ie it is never the case that $y_i \geq x_i + 3.5$. With these definitions of $S_i, V_i$ and $\F$, the acceptable alternatives in the example are exactly these which satisfy \M.

It is easy to see that the model in the example is equivalent to a model based on unanimity, \ie a model ($\Ms^{u}$), using as limiting profiles the three first alternatives in each row of Table~\ref{ta:minAccAltEx} and the natural order $\geq$ as the relation $S_i$ on $X_i$.
%
\end{remark}

\begin{remark}
It is clear that Model $(\Ms^{u})$ is a particular case
of Model $(\Ms^{c})$: if unanimity is required
to have $x \rS y$, the veto relations $\rVi$
play no role and can always be taken to be empty.
\end{remark}

The following lemma takes note of elementary consequences of the
fact that $(\rSi, \rUi)$ is a \emph{homogeneous} nested chain of semiorders (we remind the reader that the necessary definitions are recalled in Appendix~\ref{app:binary}, as supplementary material).

\begin{lemma}\label{lem:monotonicity}
\label{lemma:SAndSi}
Let $\langle \Aa, \Uu \rangle$ be a  twofold partition of $X$.
If $\langle \Aa, \Uu \rangle$  is representable in $\M$ 
then,
for all $a = (a_{i}, a_{-i})$, $b=(b_{i}, b_{-i}) \in X$,
all $i \in N$ and all $c_{i} \in X_{i}$,
\begin{subequations}\label{eq:mon:total}
\begin{align}
a \rS b \text{ and } b_{i} \rWi c_{i}  &\Rightarrow a \rS (c_{i}, b_{-i}),\label{eq:mon}\\
a  \rP b \text{ and } b_{i} \rWi c_{i} &\Rightarrow a \rP (c_{i}, b_{-i}),\label{eq:mon:st}\\
a \rS b \text{ and } c_{i} \rWi a_{i}  &\Rightarrow (c_{i},a_{-i}) \rS b,\label{eq:monL}\\
a \rP b \text{ and } c_{i} \rWi a_{i}  &\Rightarrow (c_{i},a_{-i}) \rP b,\label{eq:monL:st}
\end{align}
\end{subequations}
where $\rWi$ denotes a weak order that is compatible with the homogeneous
nested chain of semiorders $(\rSi, \rUi)$.
\end{lemma}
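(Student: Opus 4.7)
The four claims split into two symmetric pairs: \eqref{eq:mon}/\eqref{eq:mon:st} concern replacing a coordinate of the ``worse'' side $b$ by a $\rWi$-inferior value, while \eqref{eq:monL}/\eqref{eq:monL:st} concern replacing a coordinate of the ``better'' side $a$ by a $\rWi$-superior value. In each pair the first statement is about $\rS$ and the second extends it to the asymmetric part $\rP$. The whole argument runs entirely at the level of the sets $S(\cdot,\cdot)$ and $V(\cdot,\cdot)$ together with two elementary facts about compatibility, so I would do it once carefully and let the rest be strict analogues.

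The core monotonicity step I would establish first is the following: if $x_i \rWi y_i$, then (a) $z_i \rSi y_i \Rightarrow z_i \rSi x_i$, (b) $x_i \rSi z_i \Rightarrow y_i \rSi z_i$ is \emph{false in general}; wait—what I actually need are the monotone implications coming from the fact that $\rWi$ is compatible with both $\rSi$ and $\rUi$, namely, for any $u,v,w \in X_i$, $u \rWi v$ and $v \rSi w$ give $u \rSi w$, and $u \rSi v$ and $v \rWi w$ give $u \rSi w$ (and the same for $\rUi$). From these I would deduce the corresponding facts about $\rVi$ by contraposition, using that $\rVi$ is the asymmetric part of $\rUi$: if $u \rVi v$ and $v \rWi w$, then $u \rVi w$ (otherwise $w \rUi u$ together with $v \rWi w$ forces $v \rUi u$, contradicting $u \rVi v$); similarly for left multiplication by $\rWi$.

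Given these building blocks, \eqref{eq:mon} is immediate. Write $b' = (c_i, b_{-i})$. For $j \ne i$ nothing changes, and for the $i$-th coordinate, $a_i \rSi b_i$ and $b_i \rWi c_i$ give $a_i \rSi c_i$, so $S(a,b) \subseteq S(a,b')$, hence $S(a,b') \in \F$ since $\F$ is upward closed. Similarly $V(b',a) \subseteq V(b,a)= \vide$, using that $c_i \rVi a_i$ combined with $b_i \rWi c_i$ would force $b_i \rVi a_i$. The proof of \eqref{eq:monL} is entirely dual, now using the left-compatibility version of the facts. For \eqref{eq:mon:st} and \eqref{eq:monL:st}, starting from $a \rP b$, one already has $a \rS b'$ (resp.\ $(c_i,a_{-i}) \rS b$) by the previous step, so only $\Not{b' \rS a}$ (resp.\ $\Not{b \rS (c_i,a_{-i})}$) must be checked; but this is exactly the first step applied in the \emph{reverse} direction, swapping the roles of the two arguments of $\rS$, and would contradict $\Not{b \rS a}$.

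The only delicate point, and the one I would be most careful with, is the clause involving the veto relation $\rVi$. The relations $\rSi$ and $\rUi$ are semiorders compatible with $\rWi$, but $\rVi$ is only the asymmetric part of $\rUi$, so monotonicity of $\rVi$ with respect to $\rWi$ is not a direct assumption and must be extracted by the contrapositive argument sketched above, which uses both that $\rWi$ is compatible with $\rUi$ \emph{and} that $\rUi$ is reflexive/transitive in the right way for the chain. Once this veto-monotonicity lemma is in hand, all four implications reduce to bookkeeping about which attributes lie in $S$ and $V$.
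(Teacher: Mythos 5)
Your proposal is correct and follows essentially the same route as the paper's proof: unwind the definition of $\rS$ into the conditions $S(\cdot,\cdot)\in\F$ and $V(\cdot,\cdot)=\vide$, use the compatibility of $\rWi$ with $\rSi$ and $\rUi$ to show that the concordance set can only grow and the veto set stays empty, and obtain the $\rP$ versions by supposing the reverse outranking and deriving a contradiction with $\Not{b\rS a}$. Your explicit derivation of the monotonicity of $\rVi$ from that of $\rUi$ (via its being the asymmetric part) is a point the paper asserts more tersely, but the argument is the same; just make sure that when you prove $u\rVi w$ you establish both halves, $u\rUi w$ (from compatibility) and $\Not{w\rUi u}$ (your contradiction argument), rather than only the second.
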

\begin{proof}
%
Let $a' = (c_{i},a_{-i})$ and $b' = (c_{i}, b_{-i})$.
Let us show that \eqref{eq:mon} holds.
Suppose that $a \rS b$, so that $S(a, b)  \in \F$ and
$V(b,a) = \vide$.
Because $b_{i} \rWi c_{i}$, we know
that $S(a, b') \supseteq S(a, b)$. Hence, we have
$S(a, b')  \in \F$.
Similarly, we know that $V(b,a) = \vide$, so that
$\Not{b_{i} \rVi a_{i}}$.
It is therefore impossible that
$c_{i} \rVi a_{i}$ since $b_{i} \rWi c_{i}$ would imply
$b_{i} \rVi a_{i}$, a contradiction.
Hence, $V(b', a) = \vide$ and we have
$a \rS b'$.

Let us show that \eqref{eq:mon:st} holds.
Because $a \rP b$ implies $a \rS b$,
we know from \eqref{eq:mon} that $a \rS b'$.
Suppose now that $b' \rS a$ so that
$S(b', a) \in \F$ and $V(a, b') = \vide$.
Because $b_{i} \rWi c_{i}$, $c_{i} \rSi a_{i}$ implies $b_{i} \rSi a_{i}$,
so that
$S(b, a) \supseteq S(b', a)$, implying
$S(b, a)  \in \F$.
Similarly, we know that $V(a, b') = \vide$, so that
$\Not{a_{i} \rVi c_{i}}$.
It is therefore impossible that
$a_{i} \rVi b_{i}$, since $b_{i} \rWi c_{i}$ would imply
$a_{i} \rVi c_{i}$, a contradiction.
Hence, we must have $V(a, b) = \vide$,
 so that we have
$b \rS a$, a contradiction.

The proof of \eqref{eq:monL} and \eqref{eq:monL:st} is similar.
\end{proof}

The next lemma shows that Model $\M$ implies 
linearity.

\begin{lemma}\label{lem:linearity}
Let $\PART$ be a twofold partition of $\Set = \prod_{i=1}^{n} \Seti$.
If $\PART$ has a representation in Model $\M$ then it is \linear.
\end{lemma}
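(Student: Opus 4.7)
The plan is to verify condition \lineari directly from Definition~\ref{def:model:M}, using the completeness of the weak order $\rWi$ and the monotonicity properties already established in Lemma~\ref{lem:monotonicity}.

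Fix $i \in N$, and suppose $(x_i, a_{-i}) \in \Aa$ and $(y_i, b_{-i}) \in \Aa$. The key observation is that, since $\rWi$ is a weak order on $X_i$, it is complete; hence either $x_i \rWi y_i$ or $y_i \rWi x_i$. By symmetry of the roles of $(x_i, a_{-i})$ and $(y_i, b_{-i})$ in the two desired conclusions, it suffices to treat one case, say $x_i \rWi y_i$, and show that $(x_i, b_{-i}) \in \Aa$.

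Since $(y_i, b_{-i}) \in \Aa$, by the defining formula \eqref{eq:M} of Model $\M$ there exists $p \in \Pro$ with $(y_i, b_{-i}) \rS p$, and moreover $\Not{q \rP (y_i, b_{-i})}$ for every $q \in \Pro$. I would first apply \eqref{eq:monL} of Lemma~\ref{lem:monotonicity} to $(y_i, b_{-i}) \rS p$ with the improvement $x_i \rWi y_i$ on coordinate $i$, obtaining $(x_i, b_{-i}) \rS p$. Then, to establish the no-veto clause, I would argue by contradiction: suppose some $q \in \Pro$ satisfies $q \rP (x_i, b_{-i})$. Apply \eqref{eq:mon:st} with $a = q$, $b = (x_i, b_{-i})$, and $c_i = y_i$; the hypothesis $b_i = x_i \rWi y_i = c_i$ is exactly the assumed relation, so \eqref{eq:mon:st} yields $q \rP (y_i, b_{-i})$, contradicting the fact that $(y_i, b_{-i}) \in \Aa$. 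Hence no such $q$ exists, and $(x_i, b_{-i}) \in \Aa$, proving \lineari.

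There is no real obstacle: the work has essentially already been done in Lemma~\ref{lem:monotonicity}, which encodes the monotonicity of $\rS$ and $\rP$ with respect to the common weak order $\rWi$ coming from the homogeneous nested chain $(\rSi, \rUi)$. The only thing to notice is that linearity is precisely the coordinatewise ``completeness'' statement that follows from the completeness of $\rWi$, once monotonicity transports it to the level of the partition. Since this argument is uniform in $i$, we conclude that $\PART$ is \linear, as required.
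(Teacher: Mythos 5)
Your proof is correct and follows essentially the same route as the paper's: both use the completeness of the weak order $\rWi$ to split into the two cases $x_i \rWi y_i$ and $y_i \rWi x_i$, and then invoke the monotonicity properties of Lemma~\ref{lem:monotonicity} to transfer membership in $\Aa$ across the replacement of one coordinate. Your version merely makes explicit which subequations (\eqref{eq:monL} for the outranking clause, the contrapositive of \eqref{eq:mon:st} for the no-veto clause) are being applied, which the paper leaves implicit.
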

\begin{proof}
Suppose that we have
$(x_{i}, a_{-i}) \in \Aa$,
$(y_{i}, b_{-i}) \in \Aa$.
Defining the relations $W_i$ as in Lemma~\ref{lemma:SAndSi}, we have either  $x_{i} \rWi y_{i}$ or $y_{i} \rWi x_{i}$.
Suppose that $x_{i} \rWi y_{i}$.
Because $(y_{i}, b_{-i}) \in \Aa$, we know that
$(y_{i}, b_{-i}) \Surc p$, for some  $p \in \Pro$,
and
$\Not{q \aSurc (y_{i}, b_{-i})}$  for all $q \in \Pro$,
Lemma~\ref{lem:monotonicity} implies that
$(x_{i}, b_{-i}) \Surc p$ and
$\Not{q \aSurc (x_{i}, b_{-i})}$  for all $q \in \Pro$.
Hence, $(x_{i}, b_{-i}) \in \Aa$.
The case $y_{i} \rWi x_{i}$ is similar: we start with
$(x_{i}, a_{-i}) \in \Aa$ to conclude that $(y_{i}, a_{-i}) \in \Aa$.
Hence, linearity holds.
%
%
%
%
%
%
%
\end{proof}

In view of Lemma~\ref{lem:linearity}, we therefore know from
Lemma~\ref{lemma:rels:complete:Gold}
that in 
Model $\M$ 
there is, on each attribute $i \in \N$, a weak order
$\relsi$ on $\Seti$ that is compatible with the partition $\PART$.

\begin{convention}
For the analysis of $\PART$ on $\Set = \prod_{i=1}^{n} \Seti$, it is not useful to keep in $X_{i}$ elements
that are equivalent \wrt the equivalence relation $\esi$.
Indeed, if
$x_{i} \esi y_{i}$ then $(x_{i}, a_{-i}) \in \Aa$ iff $(y_{i}, a_{-i}) \in \Aa$
(see Lemma~\ref{lemma1:Gold}).

In order to simplify the analysis, it is not restrictive to suppose that we work with
$\quotient{\Seti}{\esi}$ (\ie the set of equivalence classes in $\Seti$ for the equivalence $\esi$) instead of $X_{i}$ and, thus, on
$\prod_{i=1}^{n} [\quotient{\Seti}{\esi}]$
instead of
$\prod_{i=1}^{n} \Seti$.
This amounts to supposing that
the equivalence $\esi$ becomes the identity relation.
\emph{We systematically make this hypothesis below.} This is \wl
since the properties of a partition on $\prod_{i=1}^{n} [\quotient{\Seti}{\esi}]$
can immediately be extended to a partition on $\prod_{i=1}^{n} \Seti$ (see Lemma~\ref{lemma1:Gold})
and is done for convenience only.
In order to simplify notation, we suppose below
that we are dealing with partitions
on $\prod_{i=1}^{n} \Seti$ for which all relations $\esi$ are trivial.
Our convention implies that each relation $\relsi$ is antisymmetric, so that
the sets  $\Seti$ are linearly ordered by $\relsi$.
\end{convention}

Let us define the relation $\qodom$ on $\Set$ letting, for all $x,y \in \Set$,
\begin{equation*}
x \qodom y \iff x_{i} \relsi y_{i}, \text{ for all } i \in \N.
\end{equation*}
It is clear that the relation $\qodom$ plays the role of a dominance relation
in our conjoint measurement framework. It is a partial order on $\Set$, being reflexive, antisymmetric,
and transitive. This partial order is obtained as a ``direct product of chains''
(the relations $\relsi$ on each $\Seti$)
as defined in \citet[p.~119]{CaspardBookEng2012}.

Before we turn to our main results, it will be useful to take
note of a few elementary observations
about maximal and minimal elements in partially ordered sets (posets), referring
to \citet{DaveyPriestley}, for more details.

\subsection{Minimal and maximal elements in posets}\label{ssec:minInPosets}

Let $\rel$ be a binary relation on a set $Z$.
An element $x \in B \subseteq Z$ is maximal
(\resp minimal)
in $B$ for $\rel$ if there is no $y \in B$ such that $y \arel x$
(\resp $x \arel y$), where $\arel$ denotes the asymmetric part of $\rel$.
The set of all maximal (\resp minimal)
elements in $B \subseteq Z$ for $\rel$ is denoted by $\Max(\rel, B)$
(\resp $\Min(\rel, B)$).

For the record, the following proposition recalls some well-known
facts about maximal and minimal elements of partial orders on finite sets
\citep[][p.~16]{DaveyPriestley}. We sketch its proof in Appendix~\ref{app:maximal} for completeness.

\begin{proposition}\label{prop:minimal}
Let $\rel$ be a partial order (\ie a reflexive, antisymmetric and transitive relation)
on a nonempty set $Z$.
Let $B$ be a finite nonempty subset of $Z$.
Then the set of maximal elements,
$\Max(\rel, B)$, and the set of minimal elements, $\Min(\rel, B)$, in
$B$ for $\rel$ are both nonempty.
For all $x,y \in \Max(\rel, B)$ (\resp $\Min(\rel, B)$)
we have $\Not{x \arel y}$.
Moreover, for all $x \in B$, there is
$y \in \Max(\rel, B)$ and
$z \in \Min(\rel, B)$ such that
$y \rel x$ and $x \rel z$.
\end{proposition}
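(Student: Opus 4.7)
The plan is to derive all three claims from one elementary finiteness argument. Since $\rel$ is reflexive, antisymmetric, and transitive, its asymmetric part $\arel$ is irreflexive and transitive, so any strictly $\arel$-increasing chain in $B$ consists of pairwise distinct elements; combined with the finiteness of $B$, this will bound the length of any such chain and thereby force the chain-construction I am about to use to terminate.

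For the nonemptiness of $\Max(\rel, B)$, I would fix any $x_{0} \in B$ and, as long as $x_{k}$ is not maximal, pick some $x_{k+1} \in B$ with $x_{k+1} \arel x_{k}$. Transitivity of $\arel$ yields $x_{\ell} \arel x_{k}$ for all $\ell > k$, and irreflexivity forces the constructed elements to be pairwise distinct; since $B$ is finite, the procedure must stop, and the terminal element is maximal in $B$ by construction. The argument for $\Min(\rel, B)$ is symmetric, building a $\rel$-decreasing chain. The second claim is then immediate from the definition of maximality: if $y \in \Max(\rel, B)$, no $x \in B$ satisfies $x \arel y$, in particular not when $x$ itself lies in $\Max(\rel, B)$; the minimal case is analogous.

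For the third claim, I would apply the same chain-construction starting from the given $x \in B$. This yields a finite upward chain $x = x_{0}, x_{1}, \ldots, x_{N}$ with $x_{k+1} \arel x_{k}$ and $x_{N} \in \Max(\rel, B)$. Transitivity of $\rel$ (together with reflexivity to cover the degenerate case $N = 0$) then gives $y := x_{N}$ with $y \rel x$. A symmetric downward construction produces $z \in \Min(\rel, B)$ with $x \rel z$. No step here is really a serious obstacle; the only care needed is to verify that each chain-construction must terminate, which rests on combining antisymmetry (to rule out repetitions) with the finiteness of $B$.
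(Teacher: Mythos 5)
Your proposal is correct and follows essentially the same route as the paper's own proof: a chain-building argument in which the strict part $\arel$, being irreflexive and transitive, forces all elements of the chain to be distinct, so finiteness of $B$ makes the construction terminate at a maximal (resp.\ minimal) element, which by transitivity and reflexivity also yields the third claim. The only cosmetic difference is that you phrase the argument as a direct construction while the paper phrases it as a proof by contradiction; the underlying mechanism is identical.
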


We will apply the above proposition
to the proper nonempty subset $\Aa$ of the finite set $\Set = \prod_{i=1}^{n} \Seti$,
partially ordered by $\qodom$.

\subsection{A characterization of Model $\M$}\label{sub:results}

We know that $\qodom$ is a partial order on $\Set = \prod_{i=1}^{n} \Seti$.
Because $\PART$ is a twofold partition of $\Set$, we know that
$\Aa \ne \vide$. Because we have supposed $\Set$ to be finite, so is $\Aa$.
Hence, we can apply Proposition~\ref{prop:minimal}
to conclude that the set $\As = \Min(\qodom, \Aa)$ is nonempty.

We are now fully equipped to present our main result.

\begin{theorem}\label{th:main:nB}
Let $\Set = \prod_{i=1}^{n} \Seti$ be a finite set and $\PART$
be a twofold partition of $\Set$.
The partition $\PART$ has a representation in Model $\M$
iff it is \linear. This representation can always be taken to be
$\langle ({\relsi}, {\rVi} = \vide)_{i \in \N}, \F = \{\N\}, \Pro = \As \rangle$.
\end{theorem}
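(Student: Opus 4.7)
The necessity direction is already settled by Lemma~\ref{lem:linearity}, which shows that any partition representable in Model~\M\ is linear. So the work is the sufficiency: assuming $\PART$ is linear, I will construct the representation $\langle(\relsi, \rVi = \vide)_{i\in N}, \F=\{N\}, \Pro = \As\rangle$ and verify it works.

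First I set up the candidate representation. By linearity and Lemma~\ref{lemma:rels:complete:Gold}, each $\relsi$ is a complete relation; after the convention identifying $\esi$-equivalent elements, $\relsi$ is a linear order on $\Seti$, so in particular a semiorder. I take $\rSi := \relsi$, $\rVi := \vide$, and choose $\rUi := \Seti \times \Seti$ (the universal relation), whose asymmetric part is indeed $\vide = \rVi$. Then $\rSi \subseteq \rUi$, the pair $(\rSi, \rUi)$ is trivially a homogeneous nested chain of semiorders, and $\rWi = \rS_i^{wo} \cap \rU_i^{wo} = \relsi$ is a linear order compatible with both. I take $\F := \{N\}$, which is obviously upward closed in $2^N$. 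Finally, $\Pro := \As = \Min(\qodom, \Aa)$, which is nonempty by Proposition~\ref{prop:minimal} since $\Aa$ is finite and nonempty, and whose elements are pairwise $\aqodom$-incomparable, again by Proposition~\ref{prop:minimal}; this yields $\Not{p \rP q}$ for all $p, q \in \Pro$ because, as noted below, $\rP$ coincides with $\aqodom$ on $\Pro$.

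Next I unpack what the representation says under these choices. Because $\F = \{N\}$ and every $\rVi$ is empty, the induced relation $\rS$ on $\Set$ satisfies $x \rS y$ iff $S(x,y) = N$ iff $x_i \relsi y_i$ for all $i \in N$, i.e., iff $x \qodom y$. Hence $x \rP y$ iff $x \aqodom y$, and the veto condition ``$\Not{q \rP x}$ for all $q \in \Pro$'' in \eqref{eq:M} is automatically compatible with the non-veto assumption. So the representation claim reduces to:
\begin{equation*}
x \in \Aa \iff \bigl[\exists p \in \As : x \qodom p\bigr] \text{ and } \bigl[\forall q \in \As : \Not{q \aqodom x}\bigr].
\end{equation*}

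Now I verify this equivalence in both directions. For the ``$\Rightarrow$'' direction, if $x \in \Aa$, Proposition~\ref{prop:minimal} applied to the poset $(\Aa, \qodom)$ yields some $p \in \As$ with $x \qodom p$; and if some $q \in \As$ satisfied $q \aqodom x$, then $q \aqodom x \qodom p$ would give $q \aqodom p$, contradicting the pairwise $\aqodom$-incomparability of elements of $\As$. For the ``$\Leftarrow$'' direction, I use that $\Aa$ is upward closed under $\qodom$: by definition of $\relsi$ via \eqref{eq:trace}, raising one coordinate of an element of $\Aa$ along $\relsi$ keeps it in $\Aa$, and iterating this coordinate by coordinate shows that $x \qodom p$ with $p \in \Aa$ forces $x \in \Aa$. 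The main (and only slightly subtle) obstacle is keeping the veto/dominance bookkeeping straight so that the empty-$V$, unanimity-$\F$ specialization collapses to the clean statement $\rS = \qodom$; once that is clear, the argument is essentially a direct application of Proposition~\ref{prop:minimal} together with the upward-closedness of $\Aa$.
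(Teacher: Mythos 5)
Your proof is correct and follows essentially the same route as the paper: necessity via Lemma~\ref{lem:linearity}, and sufficiency by taking $\rSi = {\relsi}$, $\rVi = \vide$, $\F = \{N\}$, $\Pro = \As$, observing that $\rS$ then coincides with $\qodom$, and invoking Proposition~\ref{prop:minimal} for both the existence of a dominated minimal profile and the impossibility of a strictly dominating one. The only additions you make are bookkeeping the paper leaves implicit (the choice of $\rUi$ and the coordinate-by-coordinate upward-closedness of $\Aa$ under $\qodom$), and these are filled in correctly.
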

\begin{proof}
We know from Lemma~\ref{lem:linearity} that Model $\M$ implies linearity.
Let us prove the converse implication.
Take, for each $i \in \N$, ${\rSi} = {\relsi}$ and ${\rVi} = \vide$.
Take $\F = \{\N\}$. Hence, we have ${\rS} = {\qodom}$.
Take $\Pro = \As$.
Using Proposition~\ref{prop:minimal}, we know that $\As$ is nonempty and that, for all
$p, q \in \As$, we have $\Not{p \aqodom q}$. Hence, taking $\Pro = \As$
leads to an admissible set of profiles in Model $\M$.

If $x \in \Aa$, we use Proposition~\ref{prop:minimal}
to conclude that there is $y \in \As$ such that
$x \qodom y$, so that we have $x \qodom p$, for some $p \in \Pro$.
Suppose now that, for some $q \in \Pro$, we have $q \aqodom x$.
Using the fact that $\qodom$ is a partial order,
we obtain $q \aqodom p$, contradicting the fact that $p, q \in \As$, in view of
Proposition~\ref{prop:minimal}.
Suppose now that
$x \in \Uu$. Supposing that $x \qodom p$, for some $p \in \Pro = \As$,
would lead to $x \in \Aa$, a contradiction.
This completes the proof. 
\end{proof}

\begin{remark}\label{rem:useless condition}
In the representation in Model $\M$ built in Theorem~\ref{th:main:nB},
the relation $\rS$ is a partial order. When this is so, the condition
stating that $\Not{q \rP x}  \text{ for all } q \in \Pro$ and all $x \in \Aa$, is automatically verified.
Indeed, suppose that, for some $q \in \Pro$ and some $x \in \Aa$, we have
$q \rP x$. Because $x \in \Aa$, there is $p \in \Pro$ such that
$x \rS p$. Transitivity leads to $q \rP p$, violating the condition on the set of profiles.
\end{remark}

\begin{remark}\label{rem:uniqueness}
Under our convention that $\relsi$ is antisymmetric, for all $i \in \N$,
it is clear that, if we are only interested
in representations with $\F = \{\N\}$, the set $\Pro$ must be taken equal to $\As$. Hence, the representation
built above is unique, under our convention about antisymmetry and the constraint that $\F = \{\N\}$.
Without the constraint that $\F = \{\N\}$, uniqueness does not obtain any more, as shown, \eg by Example \ref{ex:base} below.
Since this is not important for our purposes, we do not investigate this point further in this text.
\end{remark}

\subsection{Example}\label{sub:examples}

We illustrate the construction of the representation in Theorem~\ref{th:main:nB} with the example below. 

\begin{example}\label{ex:base}
Let $X = \prod_{i=1}^{3} X_{i}$ with
$X_{1} = X_{2} = X_{3} = \{39, 37, 34, 30, 25\}$.
Hence, $X$ contains $5^{3} = 125$ objects.

Define the twofold partition $\PART$ letting:
\begin{equation*}
(x_{1}, x_{2}, x_{3}) \in \Aa \iff
x_{1} + x_{2} + x_{3} \geq 106.
\end{equation*}

In this twofold partition, the set $\Aa$ contains $32$ objects, while $\Uu$ contains
the remaining $93$ objects.

It is easy to check that all attributes are influential for this partition
and that, on each attribute $i \in \N$, we have
$39 \asi 37 \asi 34 \asi 30 \asi 25$. For instance, for attribute 1, we have:
\begin{equation*}
\begin{aligned}
&(39, 37, 30) \in \Aa &\quad & (37, 37, 30) \notin \Aa,\\
&(37, 39, 30) \in \Aa &\quad & (34, 39, 30) \notin \Aa,\\
&(34, 39, 34) \in \Aa &\quad & (30, 39, 34) \notin \Aa,\\
&(30, 39, 37) \in \Aa &\quad & (25, 39, 37) \notin \Aa.
\end{aligned}
\end{equation*}


This twofold partition has an obvious representation in Model \eqref{def:model:Add}.
Hence it is \linear and also has a representation in Model $\M$.
Considering the representation built in
Theorem~\ref{th:main:nB} with
${\rSi} = {\relsi}$, ${\rVi} = {\vide}$,
$\Pro = \As$ and $\F = \{N\}$,
we obtain\,\footnotemark{} a representation that uses the following 12 profiles:
\footnotetext{We omit details and the
reader is invited to check this example using, \eg his/her favorite spreadsheet software.}
\begin{equation}\label{12prof}
\begin{aligned}
&(37, 37, 34)& &\quad & (39, 34, 34)&  &\quad &(39, 37, 30) &\quad &(37, 30, 39)\\
&(37, 34, 37)& &\quad & (34, 39, 34)&  &\quad &(39, 30, 37) &\quad &(30, 39, 37)\\
&(34, 37, 37)& &\quad & (34, 34, 39)&  &\quad &(37, 39, 30) &\quad &(30, 37, 39).
\end{aligned}
\end{equation}
It is clear that these twelve profiles are pairwise incomparable
\wrt ${\Surc} = {\qodom}$.

For instance, the object $(39, 30, 39)$ belongs to $\Aa$, because
$39+30+39 = 108 \geq 106$.
This object outranks (meaning here, dominates) the two profiles
$(39, 30, 37)$ and $(37, 30, 39)$, but no other.

Notice that this twofold partition is also determined by an \nTBpc model with a single limiting profile $p^1 = (39, 39, 39)$, thresholds $qt_i = 0$, $pt_i = 9$, $vt_i=14$, for $i=1, 2, 3$ and $\lambda = 16/27$.
\end{example}

\section{Remarks and extensions}\label{sec:model:ext}


\subsection{Positioning Model $\M$ \wrt other sorting models}\label{sub:comments}

Theorem~\ref{th:main:nB} gives a simple characterization of Model $\M$.
It makes no restriction on the size of the set of profiles $\Pro$,
except that it is finite.


The proof of Theorem~\ref{th:main:nB}
builds a representation of any partition $\PART$ satisfying linearity in a special case of Model $\M$,
Model $(\Ms^{u})$. This shows that Models $(\Ms^{u})$ and $\M$ are equivalent.
Because, $(\Ms^{u})$ is a particular case of Model $(\Ms^{c})$, this also shows that
Models $\M$, $(\Ms^{u})$ and $(\Ms^{c})$ are equivalent.

In view of Proposition~\ref{prop:Goldstein:monotone},
Model $\M$ is equivalent to Model \eqref{def:model:D} and, hence, to Model $(D2)$. However, Model \eqref{def:model:Add} is not equivalent to Model $\M$. An example of a linear partition that is not representable in Model \eqref{def:model:Add} is given in Appendix \ref{app:AddNotEqDecomp}, as supplementary material. Note also that a characterization of Model \eqref{def:model:Add} in case $X$ is a finite set is known but requires a countably infinite scheme of axioms \citep[see][Appendix B, and especially, Remark 31, p.~32]{BouyssouMarchant2007:additivedecomp}.

Because Model $(D2)$ contains Model \eqref{def:model:Add} as a particular case,
the same is true for
Model $\M$.

We summarize our observations in the following.
\begin{proposition}
\begin{enumerate}
\item Models $\M$, $(\Ms^{c})$, and $(\Ms^{u})$ are equivalent.
\item Models $\M$, \eqref{def:model:D}, and $(D2)$ are equivalent.
\item Model \eqref{def:model:Add}
      is a particular case of  Model $\M$ but not vice versa.
\end{enumerate}
\end{proposition}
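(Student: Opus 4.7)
The plan is to assemble the three parts from results already established in the paper, chaining Theorem~\ref{th:main:nB}, Lemma~\ref{lem:linearity}, and Proposition~\ref{prop:Goldstein:monotone}, and invoking the appendix example only for the strict inclusion in part~3. No new combinatorial or axiomatic work is required.

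For part~1, I would first dispatch the easy inclusions: any representation in $(\Ms^{u})$ is, by definition, a representation in $(\Ms^{c})$ (the choice $\F = \{\N\}$ is a specialization and the veto relations are already empty), and any representation in $(\Ms^{c})$ is a representation in $\M$. The nontrivial direction is that every partition representable in $\M$ is already representable in $(\Ms^{u})$. For this I would apply Lemma~\ref{lem:linearity} to conclude that such a partition is linear, and then quote the explicit construction given in the proof of Theorem~\ref{th:main:nB}, which produces a representation of any linear partition of the form $\langle ({\relsi}, {\rVi} = \vide)_{i\in\N}, \F = \{\N\}, \Pro = \As\rangle$. Because this representation has $\F = \{\N\}$, it lies in $(\Ms^{u})$, closing the loop ${\M} \subseteq (\Ms^{u}) \subseteq (\Ms^{c}) \subseteq {\M}$.

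For part~2, I would combine Theorem~\ref{th:main:nB}, which characterizes representability in $\M$ by linearity, with Proposition~\ref{prop:Goldstein:monotone}, which characterizes representability in both \eqref{def:model:D} and $(D2)$ by the same condition; since the three model classes are characterized by identical conditions, they coincide. For part~3, Model~\eqref{def:model:Add} is obtained from Model~$(D2)$ by taking $F(u_{1},\ldots,u_{n}) = \sum_{i=1}^{n} u_{i}$, which is strictly increasing in each argument, so \eqref{def:model:Add} is a particular case of $(D2)$ and hence, by part~2, a particular case of $\M$. Strictness of this inclusion is witnessed by the linear partition exhibited in Appendix~\ref{app:AddNotEqDecomp}, which is not representable in \eqref{def:model:Add}.

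The main obstacle is essentially nonexistent: the proposition is a consolidation of results already proved. The only point that deserves a second look is verifying that the representation produced by the proof of Theorem~\ref{th:main:nB} genuinely qualifies as an $(\Ms^{u})$ representation, which amounts to observing that the construction sets $\F = \{\N\}$ and $\rVi = \vide$ for every $i \in \N$, both of which are explicit in the statement of that theorem.
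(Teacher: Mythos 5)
Your proposal is correct and follows essentially the same route as the paper: the paper likewise derives part~1 from the $(\Ms^{u})$-representation constructed in the proof of Theorem~\ref{th:main:nB} together with the chain of specializations $(\Ms^{u}) \subseteq (\Ms^{c}) \subseteq \M$, obtains part~2 by matching the linearity characterizations of Theorem~\ref{th:main:nB} and Proposition~\ref{prop:Goldstein:monotone}, and settles part~3 via the counterexample of Appendix~\ref{app:AddNotEqDecomp}. No discrepancies to report.
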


The above proposition allows us to position rather precisely
Model $\M$ within the family of all sorting models.

\begin{remark}
As observed in Section \ref{sub:model:m}, when the set $\Pro$ of limiting profiles is restricted to a singleton, Model $\M$ is the noncompensatatory sorting model. The twofold partitions $\PART$ that can be represented in this model have been characterized as being linear and 3v-graded \citep[see][]{BouyssouMarchant2007:I}. The latter property implies that the weak order $\succsim_i$ induced by the partition on $X_i$, for all $i$, distinguishes at most three equivalence classes of evaluations on criterion $i$. In case there is no veto effect, $\succsim_i$ distinguishes at most two equivalence classes on $X_i$ (2-graded). This result characterizes the twofold partitions representable in the noncompensatory sorting model within the set of linear twofold partitions. In other words, it characterizes the particular case of Model $\M$ with one limiting profile within the general Model $\M$.
\end{remark}


%

\subsection{Model $\M$ vs. \nTBpc}\label{ssec:EvsEtrinBpc}

In order to position Model $\M$ \wrt \nTBpc, we assume that, for all $i$, $X_i$ is the range of a real-valued function $g_i$ evaluating the alternatives \wrt criterion $i$.
We know that Model $\M$ is equivalent to Model $(\Ms^{u})$.
These models are characterized by
linearity. But all partitions obtained with the original method \nTBpc satisfy linearity (by Corollary \ref{cor:ETRI1et3linear}).
Therefore, \nTBpc is a particular case of Model $\M$.

Conversely, Model $(\Ms^{u})$ is a particular case of \nTBpc that is obtained taking
the cutting level $\lambda$ to be $1$ and, on all criteria, the preference and indifference thresholds to be equal.
Since unanimity is required, veto thresholds play no role.
Since Model $(\Ms^{u})$ is equivalent to Model $\M$, Model $\M$ is a particular case of \nTBpc.

The same can be said of \nTBIpc, since this model also satisfies linearity (by Corollary~\ref{cor:ETRI1et3linear}) and contains the unanimous model $(\Ms^{u})$. This is also true of the versions without veto of \nTBpc and \nTBIpc. Therefore, $\M$ and all four models in the left part of Table~\ref{ta:variantsTBn} are equivalent models.

We take note of this in the next proposition.

\begin{proposition}\label{prop:EtrinBequivModE}
Models $\M$, \nTBpc (with or without veto) and \nTBIpc (with or without veto) are equivalent models\footnote{We emphasize that, by \emph{equivalent models}, we mean that any partition that has a representation in one of the three models, also has a representation in the two other models, using an appropriate set of parameters. In particular, not only the limiting profiles used in these models are generally different but also the numbers of limiting profiles differ.}.
\end{proposition}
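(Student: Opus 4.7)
The plan is to establish both directions: every partition induced by one of the four \nTBpc/\nTBIpc variants admits a representation in $\M$, and conversely every partition representable in $\M$ admits a representation in each of the four variants.

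The forward direction follows directly from results already in hand. By Corollary~\ref{cor:ETRI1et3linear}, the partitions induced by \nTBpc and \nTBIpc are linear; the underlying argument (going through Proposition~\ref{prop:SrespectsDom} and monotonicity of the outranking construction \wrt the component-wise order) applies equally when veto is dropped, since removing veto only makes outranking easier while preserving monotonicity in each argument. Theorem~\ref{th:main:nB} then yields a representation in Model~$\M$ in all four cases.

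For the converse, Theorem~\ref{th:main:nB} shows that any partition representable in $\M$ admits a representation in $(\Ms^{u})$, so it suffices to realize $(\Ms^{u})$ inside each \nTB-type model. Given such a representation $\langle (\relsi, \vide)_{i \in \N}, \{\N\}, \As \rangle$, I would pick, for each $i$, a real-valued function $g_{i}$ on $\Seti$ representing the linear order $\relsi$, set equal weights $w_{i} = 1/n$, cutting level $\lambda = 1$, indifference and preference thresholds $qt_{i} = pt_{i} = 0$, and a veto threshold $vt_{i}$ large enough never to be triggered (or simply absent in the no-veto versions). Under these choices the concordance index satisfies $c(x,p) \geq 1$ iff $g_{i}(x_{i}) \geq g_{i}(p_{i})$ for all $i$, \ie iff $x \qodom p$; no discordance can arise. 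Hence the constructed outranking relation coincides with $\qodom$, and declaring $\Pro = \As$ as the set of limiting profiles reproduces exactly the assignment rule of $(\Ms^{u})$, hence of $\M$. The argument is identical for \nTBpc and \nTBIpc, and the presence or absence of veto is immaterial since veto never triggers.

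The one point that requires a quick verification is the admissibility of $\Pro = \As$ as a limiting-profile set, \ie $\Not{p \aSurc q}$ for all $p, q \in \As$. Under the constructed outranking relation, $\aSurc$ is the strict component $\aqodom$ of the dominance relation, and pairwise $\aqodom$-incomparability of the elements of $\As = \Min(\qodom, \Aa)$ is exactly what Proposition~\ref{prop:minimal} provides. Combining the two directions establishes the equivalence of Model~$\M$ with all four \nTBpc/\nTBIpc variants.
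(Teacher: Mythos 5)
Your proof is correct and follows essentially the same route as the paper: linearity of all four \nTBpc/\nTBIpc variants (Corollary~\ref{cor:ETRI1et3linear}) combined with Theorem~\ref{th:main:nB} for one direction, and an explicit realization of $(\Ms^{u})$ inside each variant via $\lambda = 1$, equal indifference and preference thresholds, and inoperative vetoes for the other. Your version merely spells out the parameter choices and the admissibility check on $\Pro = \As$ in more detail than the paper does.
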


The relationship of $\M$ with \nTBIpc is however tighter than with \nTBpc. Indeed, as shown in Remark~\ref{rem:simpleretrinB}, any \nTBIpc
model has a straightforward representation in Model $\M$ that does not make use of the unanimous model.

\subsection{Unanimous representations}\label{ssec:unanimous}
The reader may be perplexed by the fact that the proof of Theorem~\ref{th:main:nB}
builds a representation in Model $\M$ in which $\F =\{\N\}$.
This is indeed a very particular form of representation.
Notice that there are linear partitions of which the sole representation in Model $\M$ is a unanimous one, \ie with $\F =\{\N\}$. The curious reader will find such an example in Appendix \ref{app:OnlyReprIsUnanim}, as supplementary material.
Hence, representations with $\F =\{\N\}$ are sometimes quite useful and even, may be the only possible ones. We show below that obtaining such a representation from a representation based on concordance, \ie in Model $(\Ms^{c})$, is easy.



Any representation in Model $\M$
can be transformed
into a representation with $\F =\{\N\}$. This is a direct
consequence of Theorem~\ref{th:main:nB}.
When the representation is without discordance,
we show below that the process of building $\Aa$ and then
deriving $\As$ can be avoided.

Suppose we know a representation
$\langle (\rS_{i}, \vide)_{i \in N}, \F, \Pro \rangle$ of
the partition $\langle A, U \rangle$ in $\M$ (in fact, in $(\Ms^{c})$)
and that $\F \neq \{N\}$.

We want to find
a representation such that $\F = \{N\}$. Theorem~\ref{th:main:nB}
ensures that such a representation exists.
It can be built quite efficiently, independently
of the construction used in Theorem~\ref{th:main:nB}.

Let $\Fs = \Min(\supseteq, \F)$, the set of minimal elements in $\F$ \wrt inclusion.
For each $i \in \N$, let $x^{0}_{i}$ be the unique element in $\Seti$ that is minimal for
the linear order $\relsi$.
Define $x^{0}$ accordingly. Moreover, let:
\begin{equation*}
\Pro' = \{ (x^{0}_{-I},p_{I}), \text{ for all } p \in \Pro \text{ and } I \in \Fs \}.
\end{equation*}
It is clear that
$\langle ({\rS_{i}}, \vide)_{i \in N}, \{N\}, \Pro' \rangle$
is a representation of the partition $\langle A, U \rangle$ in $\M$.

\subsection{Variable set of winning coalitions $\F$}

A rather natural generalization of Model $\M$,  called $\MMM$,
is as follows. Instead of considering a single family of winning coalitions
$\F$ that is used to build the relation $\Surc$ and compare
each alternative in $\Set$ to all profiles in $\Pro$, we could use
a family $\F^{p}$ that would be specific to each profile $p \in \Pro$,
with a relation $\Surc^{p}$
that now depends on the profile.

The analysis of Model $\MMM$
is easy. It is simple to check that Model $\MMM$
implies linearity. Indeed, for each relation $\Surc^{p}$, Lemma~\ref{lem:monotonicity}
holds and, hence, linearity cannot be violated. This shows that
Model $\MMM$ is a particular case of Model $\M$ and, hence, is equivalent to it.

\subsection{More than two categories}

Our analysis of Model $\M$ can easily be extended to cover the case of an arbitrary number
of categories. Because this would require the introduction of a rather cumbersome framework,
without adding much to the analysis of two categories, we do not formalize this point. We briefly
indicate how this can be done, leaving the details to the interested readers.

Linearity has been generalized to cope with more than two categories. This is
done in \citet{GMS02Control} and \citet{BouyssouMarchant2007:II}.
The intuition behind this generalization is simple. It guarantees that there is
a weak order on each attribute that is compatible with the ordered partition.


When $\Set$ is finite, this condition is
necessary and sufficient to characterize the obvious generalization of
Model \eqref{def:model:D} that uses more than one threshold, instead of the single
threshold $0$ \citep[see, \eg][Prop.~7, p.~250]{BouyssouMarchant2007:II}.
Moreover, it is easy to check that this condition is satisfied by the natural
generalization of Model $\M$ that uses more than two categories (this involves working with
of a set of profiles $\Pro$ for each of the induced twofold partitions).

Now, the technique used in the proof of Theorem~\ref{th:main:nB}
easily allows one to define a family of profiles for each of the induced twofold
partitions.
It just remains to check that these families of profiles satisfy the constraints
put forward in \citet[Condition~1, p.~216]{FernandezEJOR2017}. This is immediate.

\subsection{Pseudo-disjunctive \TRIs-nB}\label{sse:disjunctive}

Up to this point, we have investigated the properties of Model $\M$
which is closely linked to \nTBpc and some of its variants.
We now briefly examine \nTBpd (defined in Section~\ref{sssec:etrinB}).

Let us first observe, as in Remark~\ref{rem:ETRI1et3monotone}, that \nTBpd and \nTBIpd respect the dominance relation. Therefore, by Proposition~\ref{prop:partRespectsDom}, we have that these models satisfy linearity. Hence all partitions determined either by \nTBpd or by \nTBIpd have a representation in Model $\M$.

Whether or not these models are equivalent to Model $\M$ is still unclear. The interested reader may refer to \citet[][Section~5]{bouyssouMP2020arXivETRInB}, for a theoretical look at \nTBpd. This reference contains examples showing that the study of this model is more complex than that of the pseudo-conjunctive version. These examples suggest that \nTBpd might be strictly included in Model $\M$, but this question remains open.

The extra complexity involved in studying the pseudo-disjunctive model was already pointed out by \citet{BouyssouMarchantEJOR2015} in the case of \TBpd. Their analysis concludes that this is mainly due to the fact that \TBpc and \TBpd are not dual of each other \citep[see][Section~5.3, for more detail]{bouyssouMP2020arXivETRInB}.

\section{Discussion}\label{se:discussion}\label{sec:discussion}


\subsection{Summary}\label{ssec:summaryDisc}
Using classical tools from conjoint measurement, we have proposed a
new interpretation of the Decomposable model \eqref{def:model:D} introduced by Goldstein, \ie of linear twofold partitions of a finite product set $X = \prod_{i=1}^{n} X_i$. Any such partition can be represented in Model $\M$ using an appropriate set of limiting profiles. It can also always be represented in Model $(\Ms^{u})$, the unanimous model, using as limiting profiles the set of minimally acceptable elements in $X$
. Some linear twofold partitions have a unique representation in Model $\M$. It is then a unanimous one. Some other linear twofold partitions have several  representations in model $\M$, one of which is unanimous.

In case $X_i$ is a finite subset endowed with a linear ordering $\geq_i$ (\eg $X_i$ is the range of the real-valued evaluation function $g_i$ \wrt criterion $i$) and the partition $\PART$ of $X=\prod_{i=1}^{n}X_i$ respects dominance, then it is linear \wrt the weak order $\succsim_i$ on $X_i$ that is compatible with the linear order $\geq_i$ on $X_i$. Any such twofold partition is representable in Model \nTBpc using an appropriate set of limiting profiles and other model parameters (thresholds, additive criteria weights). It can also be represented in model \nTBIpc (again with appropriate, possibly different, limiting profiles and parameters). Of course, it can also be represented in Models \eqref{def:model:D} or in Model $\M$ (possibly using a set of winning coalitions $\F$ that cannot be represented by additive weights and a threshold). Since all these models contain as a particular case the unanimous model $(\Ms^{u})$, it may happen that the sole representation of the twofold partition in these models is the unanimous one. In some cases, but not always, there exists a more  synthetic representation with fewer limiting profiles and a nontrivial set of winning coalitions.

Note that any sorting model producing partitions respecting dominance and able to determine any partition generated by the unanimous model is equivalent to Model \eqref{def:model:D}. Not all sorting models respecting dominance however are able to determine any partition produced by the unanimous model. For example, the additive model \eqref{def:model:Add} is strictly included in Model \eqref{def:model:D}.

Somewhat surprisingly, while Bernard Roy had
always championed outranking approaches
as an alternative to the classical additive
value function model,
it turns out that the last \textsc{Electre} method that he published before he passed away, \lnTB, contains the additive value function model for sorting as a particular case. We think that this unexpected conclusion is a plea
for the development of axiomatic studies in the field of decision with multiple attributes, as already advocated in \citet{Manifesto}, more than 25 years ago.

\subsection{Perspectives for elicitation and learning}\label{ssec:learning}
In this section, we focus on situations in which a model of assignment respecting dominance has to be learned solely on the basis of assignment examples. We thus assume that either we do not have the opportunity to interact with a decision maker or, if such a possibility exists, we decided to only ask her questions in terms of assignments to the classes of the partition. In such a perspective, the intuitive content of the underlying models plays no role, while mastering the complexity of these models is important as observed in  Section \ref{sssec:observations}, item 3.

\nTBpc is equivalent to Model \M, which is equivalent to the Decomposable model \eqref{def:model:D} and thus to the model based on ``at least'' decision rules (see Section~\ref{ssubsec:D1:int}). Techniques have been proposed to learn a decision rule model
\citetext{see \citealp{GrecoMatarazzoSlowinski99bEJOR,GMS2001EJOR,GMS2001,GMS02Control,GMS2016},
\citealp[and, for a recent application of these techniques,][]{AbastanteJMCDA}}. A large scale recent application to the detection of frauds in car loans applications is described in \citet{BLASZCZYNSKIetalExpertSystAppl2021}. The dataset involves 26\,187 loans among which 405 were fraudulently obtained. The dominance-based rough set approach (DRSA) is applied. It determines ``at least'' decision rules, which approximately reproduce the partition in fraudulent and non-fraudulent loans. The approach outperforms two classical machine learning techniques (random forest and support vector machine).
Methods such as DRSA can directly be used to learn an \nTBpc model since a limiting profile can immediately be deduced from any ``at least'' decision rule $d$. Indeed, the latter specifies minimal levels to be attained on a subset $N_d \subseteq N$ of criteria in order to be assigned to category $\Aa$. A minimally acceptable element associated to such a rule is the $n$-tuple whose components corresponding to the criteria in $N_d$ are set to the minimal levels specified in the rule. The components corresponding to any criterion $i \not\in N_d$ are set to the minimal element in $X_i$ (\wrt the linear order $\geq_i$). An alternative satisfies rule $d$ iff it is at least as good as the minimally acceptable element associated to the rule. This correspondence between rules and minimally acceptable elements provides a description of the partition in the unanimous model $E^{u}$.

Because Model \eqref{def:model:D} is quite general and the learning sets of assignment examples usually of limited size, using these techniques
is not entirely straightforward and, \eg may lead, in the decision rule model,
to a large number of rules. Moreover, these techniques, when used for learning an \nTBpc model, produce indeed an \nTBpc model but under the form of a unanimous model, \ie in terms of a set of minimally acceptable alternatives, not under a more compact form, even when there is one.

Having at hand alternative descriptions of Model \eqref{def:model:D} may offer an opportunity to control the complexity of the learned models.
It is therefore important to investigate particular
cases of Model $\M$, in which the cardinality
of the set of profiles $\Pro$ is restricted. Unfortunately, the problem seems to be difficult.
This is the subject of a companion paper that deals with these more technical issues
\citep{BouyssouMarchantPirlot2019ETRInprof}. This companion paper only
analyzes the particular case
of two profiles coupled with unanimity. Even in this apparently simple case,
the problem is not easy.
Hence, our analysis also leaves open the study of the gain of expressiveness brought
by increasing the size of the set of profiles.
Going from a single profile, the case studied in \citet{BouyssouMarchant2007:I},
to an arbitrarily large number of profiles, the case
implicitly studied in Section~\ref{sec:model:m},
leads to a \emph{huge} gain in expressiveness.
Is this gain already present when going from
a single profile to a small number of profiles?
This question is clearly important as a guide to learning the parameters of \nTB.
Our analysis of the case of two profiles coupled with unanimity
shows that it is unlikely that a purely axiomatic investigation
will allow us to obtain clear answers to this question.
Hence, this is also a plea to combine axiomatic work with other
types of work, \eg based on computer simulations.

Instead of constraining the number of limiting profiles in the unanimous model, an alternative approach would consist of exploring \nTBpc or \nTBIpc with  restricted number of limiting profiles\footnote{With such models, even with one limiting profile, the number of minimally acceptable alternatives (\ie the number of limiting profiles in the equivalent unanimous model) can be large. In the simplest case of \nTBIpc with one limiting profile and no veto, a minimally acceptable alternative takes the values of the limiting profile minus the indifference threshold $qt_i$ on a minimal winning coalition of criteria and the minimal value in $X_i$ on all other criteria. If the model is such that a coalition is winning whenever it contains at least $n/2$ criteria (this is obtained by setting all criteria weights to $1/n$ and the cutting level $\lambda$ to $1/2$), then the number of minimal winning coalitions is maximal and is equal to the Sperner number $n\choose \lceil n/2\rceil$ \citep[see, \eg ][pp.~116-118]{CaspardBookEng2012}.  Therefore, for such a model, the maximal number of minimally acceptable alternatives is equal to this number.}. Models \nTBIpc and  $\M$ with one limiting profile are well-known. Model $\M$ with one limiting profile is the noncompensatory sorting (NCS) model characterized by \citet{BouyssouMarchant2007:I}. The particular case in which the winning coalitions can be represented by additive weights and a majority threshold corresponds to model \nTBIpc with one limiting profile. In the absence of veto, this model is known as MR-Sort \citep{LeroyMousseauPirlot2011,Sobrie2018}

Several methods have been proposed for learning Model $\M$ with one limiting profile \citep{Belahcene2018COR} or its particular case Model \nTBIpc with one limiting profile \citep{LeroyMousseauPirlot2011,Sobrie2018,SobriePopulation2017}. These methods rely on various techniques such as mixed integer programming (MIP), Boolean satisfiability algorithms (SAT, MaxSAT) or metaheuristics. The size of sets of assignment examples that exact methods (such as MIP, SAT or MaxSAT) can deal with is limited. In contrast, the metaheuristic designed by \citet{SobrieADT13,Sobrie2018} or that by \citet{OlteanuMeyer2014} competes with state-of-the-art machine learning algorithms on real datasets \citep{TehraniCDH12,TehraniIEEEFuzzy12}. The size of these datasets, which are benchmarks commonly used in machine learning, ranges from 120 to 1728 alternatives, the number of attributes, from 3 to 8 and the number of categories, from 2 to 9.

Characterizing Model $\M$ with a fixed small number (\eg 2 or 3) of limiting profiles seems very difficult. The only thing that can easily be provided is an upper bound on the number of equivalence classes of the relations $\succsim_i$ induced by the corresponding twofold partition on the scale $X_i$ of each criterion (see Observation~2 in Section~\ref{sssec:observations}, for an illustration) and correlatively, an upper bound on the maximal number of minimally acceptable alternatives (see footnote \ref{foot:TOP}, p.~\pageref{foot:TOP}). Extending the approaches referred to above for learning models with more than one limiting profile has not been done yet and does not seem straightforward either. In case a formulation for learning such models when the number of limiting profiles is limited to 2 or 3 would prove operational, then an incremental learning approach could be envisaged. Start with fitting Model $\M$ with one limiting profile to the data. If assignment accuracy is not satisfactory, proceed with fitting a model with two profiles, etc.

Turning to the learning of an \nTBpc model (using an \textsc{Electre}~III outranking relation), observe first that the case with one limiting profile corresponds to the classical \TRI-pc model. Much effort has been devoted to develop learning methods for this model \citep[\eg][]{MousseauSlowinski1998jogo,NgoTheMousseau2002,DoumposMMZ09}. The genetic algorithm proposed by \citet{DoumposMMZ09} has been tested on a real dataset (in the banking sector) involving 100 alternatives evaluated on 7 criteria and assigned to 3 categories. It has also been tested on artificial datasets, involving up to 1000 alternatives in the learning set, assigned to categories using randomly generated \TRIs models. No exact methods have been developed to date to learn \nTBpc.  The genetic algorithm proposed by \citet{FernandezSoftComputing2019} for learning \nTBpc has shown good performance on a real case study involving 81 assignment examples (R \& D projects) evaluated on 4 criteria assigned to 8 categories. It has also been
tested on artificial datasets assigned to categories by randomly generated \nTBpc (using 5 limiting profiles in each category boundary).

In conclusion, the general approach, based on the decision rule model and techniques, without restrictions on the generality of Model \eqref{def:model:D}, is available but does not allow to easily control the simplicity of the learned model \citep[see][]{GrecoMSS00a,DembczynskiPS03a}.  Current methods, such as the genetic algorithm proposed by \citet{FernandezSoftComputing2019} are usable. Unfortunately, the scalability of these algorithms is difficult to assess since they have not been tested on common benchmark datasets
\footnote{Incidentally, we came across the recent paper by \citet{Silva2021} in which the decision rule approach (DRSA) is applied to the rating of sovereign risk; the results are then compared with those obtained using an additive model and MR-Sort. Unfortunately, the dataset involved is a small one (36 countries).}
.

Alternative approaches remain to be developed. Ideally, they should fulfill the following three requirements.
\begin{enumerate}
  \item Focus on a well-defined, preferably characterized, family of assignment models forming a proper subset of all assignment models \eqref{def:model:D} respecting the linearity property or the dominance relation.
  \item The models in this family should have a compact description, \ie there should be a synthetic, interpretable, manner of describing the set of minimally acceptable alternatives.
  \item Learning these models should be computationally tractable, \ie there should be an algorithm able to fit, in reasonable computing time, a model in the family to a set of assignment examples involving several hundreds up to a few thousands assignments.
\end{enumerate}

Model $\M$ and \nTBpc with a restricted number of limiting profiles fulfill the first requirement and the second but not the third (except perhaps for Model $\M$ with one limiting profile). The additive model \eqref{def:model:Add} is a candidate that checks all three boxes. It is closely related with the UTADIS technique and its variants (as already mentioned in Section \ref{subsec:Goldstein}). However, its interpretation is quite at a distance from that of \textsc{Electre} based models, which rely on the idea of one or several limiting profiles and outranking relations. These are interpreted as expressing requirements on each criterion that an alternative should ideally fulfill in order to be acceptable.
A challenging research issue is thus to define a family of models %
within the Decomposable model \eqref{def:model:D} that are in the spirit of the \textsc{Electre} methods and fulfills the above three requirements.

\subsection{Future research and work in progress}\label{ssec:researchIssues}

Theorem~\ref{th:main:nB} is a simple result that establishes the equivalence of the Decomposable model \eqref{def:model:D} with \nTBpc and related models in the spirit of \TRI.
Besides issues related to learning, this result leaves open a number of interesting problems that we intend to deal with in later studies.
Among them, let us mention the following sets of questions. 

\emph{Algorithmic questions}.  Is it easy to test whether a partition $\PART$ satisfies linearity?
Are there efficient algorithms to
find a linear partition close to a partition that is not linear?
Is it easy to test whether it is possible to build a linear partition on the basis of partial information about $\Aa$ and $\Uu$?
Similar questions arise, when there is a supplementary constraint on the size of the set of profiles.


\emph{Combinatorial questions}.  Given a set $X = \prod_{i=1}^{n} X_{i}$, can we
  devise (easy to evaluate) formulae for the maximal
  number of objects in $\As$ (this is related to the size of the largest
  antichain\,\footnotemark{} in a direct product of
  chains,
  \citealp[see][for the case of the direct product of chains of the same length]{Sander1993})?
  What is the number of twofold partitions of $X = \prod_{i=1}^{n} X_{i}$
  that can be represented in Model $\M$
    \citep[this is related to the famous problem of Dedekind numbers, see][]{Kahn2002,Kisielewicz1988,Uyanik2017}?

%
\footnotetext{In \citet{BouyssouMarchantPirlotAntichains}, we give the proof
of a result that has already appeared
in the grey literature but for which no proof was available. It states that
if the chain on $\Seti$ has $m_{i}$ elements, the maximal size of an antichain
in $\Set = \prod_{i=1}^{n} \Seti$ partially ordered by $\qodom$ is
\begin{equation*}
\sum_{I \subseteq \N: m_{I} < h - n} \binom{h - m_{I} -1}{n-1} (-1)^{\card{I}},
\end{equation*}
where $m_{I} = \sum_{i \in I} m_{i}$ and
\begin{equation*}
h = \left\lfloor \frac{n + \sum_{i \in \N} m_{i}}{2} \right\rfloor.
\end{equation*}
The reader will check that this number grows fast with the vector $(m_{i})_{i\in\N}$. \label{foot:TOP}}

\emph{Questions linked to the number of profiles}.
Given a learning set of assignment examples that is compatible with Model \eqref{def:model:D}, what is the minimal number of limiting profiles of a unanimous model (or of a Model $\M$ or of an \nTBpc model) that exactly restores the assignments?

%



\addcontentsline{toc}{section}{References}
\bibliographystyle{plainnat}
\bibliography{compens}
\newpage\pagenumbering{roman}
\appendix
\part*{Appendix: To appear as supplementary material}
\addcontentsline{toc}{part}{Appendix: Supplementary material}


\section{Binary relations}\label{app:binary}

We use a standard vocabulary for binary relations.
For the convenience of the reader and in order to avoid any misunderstanding,
we detail our vocabulary here.
A binary relation $\rel$ on a set $Z$ is a subset of $Z \times Z$.
For $x, y \in Z$, as is usual, we will often write
$x \rel y$ instead of  $(x,y) \in {\rel}$.

Let $\rel$ be a binary relation on $Z$. We define:
\begin{itemize}
\item the asymmetric part $\arel$ of $\rel$ as $x \arel y
      \iff \left[x \rel y \text{ and } \Not{y \rel x} \right]$,
\item the symmetric part $\irel$ of $\rel$ as $x \irel y
      \iff \left[x \rel y \text{ and } y \rel x\right]$,
\item the symmetric complement $\jrel$ of $\rel$ as $x \jrel y
      \iff \left[\Not{x \rel y} \text{ and } \Not{y \rel x}\right]$,
\end{itemize}
for all $x, y \in Z$.

A binary relation $\rel$ on $Z$ is said to be:
\begin{enumerate}[(i)]
\item \emph{reflexive} if $x \rel x$,
\item \emph{irreflexive} if $\Not{x \rel x}$,
\item \emph{complete} if $x \rel y$ or $y \rel x$,
\item \emph{symmetric} if $x \rel y$ implies $y \rel x$,
\item \emph{asymmetric} if $x \rel y$ implies $\Not{y \rel x}$,
\item \emph{antisymmetric} if $[x \rel y$ and $y \rel x]$ $\Rightarrow$ $x = y$,
\item \emph{transitive} if $[x \rel y$ and $y \rel z]$ $\Rightarrow$ $x \rel z$,
\item \emph{Ferrers} if $[x \rel y \text{ and } z \rel w] \Rightarrow
[x \rel w \text{ or } z \rel y]$,
\item \emph{semitransitive} if $[x \rel y \text{ and } y \rel z] \Rightarrow
[x \rel w \text{ or } w \rel z]$,
\end{enumerate}
for all $x,y,z,w \in Z$.

We list below a number of remarkable structures. A binary relation
$\rel$ on $Z$ is said to be:
\begin{enumerate}[(i)]
\item a \emph{weak order} (or \emph{complete preorder}) if it is complete and transitive,
\item a \emph{linear order} if it is an antisymmetric weak order,
\item a \emph{semiorder} if it is reflexive, Ferrers and semitransitive,
\item a \emph{strict semiorder} if it is irreflexive, Ferrers and semitransitive,
\item an \emph{equivalence} if it is reflexive, symmetric, and transitive,
\item a \emph{partial order} if it is reflexive, antisymmetric and transitive.
\end{enumerate}
Notice that a reflexive and Ferrers relation must be complete. Similarly
an irreflexive and Ferrers relation must be asymmetric.

When $\rel$ is an equivalence relation on $Z$,
 the set of equivalence classes of $\rel$ on $Z$ is denoted $\quotient{Z}{\rel}$.
A \emph{partition} of $Z$ is a collection of nonempty subsets of $Z$
that are pairwise disjoint
and such that the union of the elements in this collection is $Z$.
It is clear that, when $\rel$ is an equivalence relation on $Z$,
$\quotient{Z}{\rel}$
is a partition of $Z$.


When $\rel$ on $Z$ is a semiorder, its asymmetric part $\arel$ is irreflexive,
Ferrers and semitransitive, \ie a strict semiorder.

Any Ferrers and semitransitive
$\rel$ on $Z$ (which includes semiorders and strict semiorders) induces a
weak order $\rel^{wo}$ on $Z$ that is defined as follows:
\begin{equation}\label{eq:wo:so}
a \rel^{wo} b \text{ if }
\forall c \in Z, [b \rel c  \Rightarrow  a \rel c]
\text{ and }
[c \rel a  \Rightarrow  c \rel b].
\end{equation}
If $\rel$ is a semiorder and $\brel$ is its asymmetric part,
it follows that
${\rel^{wo}} = {\brel^{wo}}$. The weak order induced by a semiorder
is identical to the one induced by its asymmetric part.

Let $\rel$ and $\brel$ be two semiorders on $Z$ such that
${\rel} \subseteq {\brel}$.
We say that $(\rel, \brel)$ is a \emph{nested chain} of semiorders.
Let $\rel^{wo}$ (\resp $\brel^{wo}$) be the weak order on $Z$ induced by
$\rel$ (resp.\ $\brel$).
If a nested chain of semiorders ${\rel} \subseteq {\brel}$ is such that the relation
${\rel^{wo}} \cap {\brel^{wo}}$ is complete (and therefore is a weak order),
we say that the nested chain of semiorders $(\rel, \brel)$ is \emph{homogeneous}
\citep[][]{DoignonMonjardetRoubensVincke88JMP}.

Finally, let us note that $\rel$ is a semiorder on a finite set $Z$ iff
there are a real-valued function
$f$ on $Z$ and a positive number $s > 0$ such that, for all $a, b \in Z$,
$a \rel b \Leftrightarrow f(a) \geq f(b) - s$

\section{Sketch of the proof of Proposition~\ref{prop:minimal}}\label{app:maximal}

Suppose that $\Max(\rel, B)$ is empty.
Let $x \in B$.
By hypothesis, $x$ does not belong to $\Max(\rel, B)$.
This implies that there is
$w_{1} \in B$ such that $w_{1} \arel x$.
Clearly, this implies that $w_{1}$ is distinct from $x$, because $\arel$ is irreflexive.
But $w_{1}$ does not belong to $\Max(\rel, B)$.
This implies that there is
$w_{2} \in B$ such that $w_{2} \arel w_{1}$.
Clearly, this implies that $w_{2}$ is distinct from both $w_{1}$ and $x$.
Continuing the reasoning leads to postulating the existence of a chain of elements
$w_{i}$, $i \in \Natp$, that are all distinct (otherwise, the transitivity of $\arel$
will lead to violate irreflexivity). This violates
the finiteness of $B$. Hence, $\Max(\rel, B)$ must be nonempty.
The proof that  $\Min(\rel, B)$ must be nonempty is similar.
The fact that, for all $x,y \in \Max(\rel, B)$,
we have $\Not{x \arel y}$ is clear from the definition of $\Max(\rel, B)$.
The same is clearly true with $\Min(\rel, B)$.

Suppose now that $x \in B$ and there is no
$y \in \Max(\rel, B)$ such that $y \rel x$.
If $x \in \Max(\rel, B)$, the contradiction is established, because $\rel$ is reflexive.
Suppose, hence, that $x \notin \Max(\rel, B)$.
There is $w_{1} \in B$ such that $w_{1} \arel x$.
But it is impossible that $w_{1}$ belongs to $\Max(\rel, B)$.
This implies that there is $w_{2} \in B$
such that $w_{2} \arel w_{1}$. Because,
$\arel$ is transitive, it is impossible that $w_{2} \in \Max(\rel, B)$.
Because $\arel$ is asymmetric are transitive, it is impossible that
$w_{2}$ is identical to $w_{1}$ or to $x$.
Continuing the same reasoning,
leads to postulating the existence of a chain of elements
$w_{i}$, $i \in \Natp$, that are all distinct.
This violates the finiteness of $B$.
Hence, there exists $y \in \Max(\rel, B)$ such that $y \rel x$.
The proof that if $x \in B$, there is $z \in \Min(\rel, B)$ such that
$x \rel z$ is similar. \hfill  $\Box$

\section{Example: Minimally \acceptable alternatives for rational evaluations with one decimal digit}\label{app:sec:ExampleDecimal}
The \nTB model specified in Section \ref{sse:exampleEtrinB} uses two profiles
$p^1 = (8,7,5)$ and $p^2=(5,6,8)$. The indifference, preference and veto thresholds are, respectively, $qt_i=1$, $pt_i=2$ and $vt_i=4$, the same for all criteria $i=1,2,3$. All criteria have the same weight $w_i=\frac{1}{3}$ and the cutting threshold $\lambda = .6$.
We apply this model to the set $X$ of alternatives whose evaluations are rational numbers with one decimal digit ranging in $[0,10]$. The set of minimally \acceptable alternatives is determined below.

For $x\in X$ to be \acceptable, $c(x,p^1)$ or $c(x,p^2)$ has to be at least equal to $\lambda = .6$. We develop the consequences of this condition for $p^1$, the case of $p^2$ being similar. This condition entails that $x_i$ must be strictly greater than $p^1_i - 2$ for at least two criteria. We distinguish two cases:  Case~1: $x_i$ is strictly greater than $p^1_i - 2$ on all three criteria;  Case~2: $x_i$ is strictly greater than $p^1_i - 2$ on exactly two criteria and less than this value on the third criterion.

\subsection{Case 1}\label{sse:Ex3crit}
Let $c_i$ be shorthand for $c_i(x,p^1)$, $i=1,2,3$. If $x \in \Aa$, $c(x,p^1) = \sum_{i=1}^{3}w_i c_i= 1/3 \sum_{i=1}^{3} c_i \geq .6$. If $x_i$ is strictly greater than $p^1_i - 2$ for all $i$, then we have $\sum_{i=1}^{3}c_i \geq 1.8$ with $c_i >0$ for all $i$. The alternative $y= (7,6,4)$ is the minimal one realizing $c(y,p^1)=1$. With respect to $(7,6,4)$, we may decrease all coordinates by a total of $1.2$ while remaining in $\Aa$.  For instance, for $x=(6.5, 5.8, 3.5)$, we have $\sum_{i=1}^{3} c_i= 1.8$ and $x$ is minimally \acceptable. There are actually $11 \choose 2$ =  55 ordered partitions of 12 objects (12 tenths) in three nonempty subsets. Among them, 3 partitions have a class of cardinal 10, which we must exclude. Hence, there are 52 ways of decreasing each coordinate of $(7,6,4)$ by at least one tenth, for a total amount of $12/10$ while
yielding rational coordinates with one decimal digit, that are respectively strictly greater  than $6,5,3$.  To this we have to add the different ways of decreasing two of the coordinates of $(7,6,4)$ by a total amount of $12/10$, while keeping unchanged the value of the third coordinate. There are $3 \times 7 = 21$ such alternatives. Hence there are $52 + 21 = 73$ minimally \acceptable elements of this type for $p^1$ and 73 for $p^2$.

\subsection{Case 2}\label{sse:Ex2crit}
The second type of minimally \acceptable elements $x$ satisfies $x_i > p^1_i -2$ for two values of $i$; the other coordinate does not satisfy this inequality. Assume that the latter coordinate is $i=3$. The condition $c(x,p^1) \geq .6$ is only satisfied in the following 6 cases:
\begin{enumerate}
  \item $(x_1,x_2) = (7,6)$ and $x_3 < 3$; in such a case $c(x,p^1)= 2/3$;
  \item $(x_1,x_2) = (6.9,6)$ or $(7,6.9)$ and $x_3 < 3$; in such a case $c(x,p^1)= 19/30$;
  \item $(x_1,x_2) = (6.8,6)$ or $(6.9,5.9)$ or $(7,5.8)$ and $x_3 < 3$; in such a case $c(x,p^1)= 6/10$;
\end{enumerate}
Let us now compute in each case, the minimal value of $x_3$ such that $\sigma(x,p^1) \geq .6$.
\begin{enumerate}
  \item If $x_3 = 1.6$, $d_3(1.6, 5) = 0.7 > 2/3 = c(x,p^1)$. We have $\frac{1-d_3(1.6,5)}{1-c(x,p^1} = \frac{3/10}{1/3}= 9/10$. Therefore $\sigma((7,6,1.6),p^1)= 2/3 \times 9/10 = .6$. Taking $x_3 <1.6$ would lead to an \unacceptable alternative. Hence $(7,6,1.6)$ is minimal in $\Aa$.
  \item If $x_3 = 1.7$, $d_3(1.7, 5) = 0.65 > 19/30 = c(x,p^1)$. We have $\frac{1-d_3(1.7,5)}{1-c(x,p^1} = \frac{7/20}{11/30}= 21/22$. Therefore, for $(x_1,x_2) = (6.9,6)$ or $(7,6.9)$,  $\sigma((x_1,x_2,1.7),p^1)=  19/30 \times 21/22 \approx .6045 >.6$. Taking $x_3 <1.7$ would lead to \unacceptable alternatives. Hence $(6.9,6,1.7)$ and $(7,5.9,1.7)$ are minimal in $\Aa$.
  \item If $x_3 = 1.8$, $d_3(1.8, 5) = 0.6  = c(x,p^1)$. We have $\sigma((x_1, x_2, 1.8),p^1)= c((x_1, x_2, 1.8),p^1) = .6$, for $(x_1,x_2) = (6.8,6)$ or $(6.9,5.9)$ or $(7,5.8)$. Taking $x_3 <1.8$ would lead to \unacceptable alternatives. Hence $(6.8,6,1.8)$, $(6.9, 5.9, 1.8)$ and $(7,5.8,1.8)$ are minimal in $\Aa$.
\end{enumerate}
There are thus 6 minimally \acceptable alternatives with their third coordinate smaller than $p^1_3 - 3$. By symmetry, there are 6 minimally \acceptable alternatives having a third coordinate smaller than $p^1_i - 3$. Therefore, there are 18 minimally \acceptable alternatives of the second type for $p^1$ and similarly for $p^2$.

Summing up, the total number of minimally \acceptable alternatives is $2 \times (73+18) = 182$. None of these dominates another, as it can be easily verified.
\section{Example: A linear partition having only a unanimous representation in model $\M$} \label{app:OnlyReprIsUnanim}
%
The example has $n= 4$ and $X_{1} = X_{2} = X_{3} = X_{4} =  \{0, 1, 2\}$.
We let $\Aa = \{x \in X: \sum_{i=1}^{4} x_{i} \geq 6\}$. There are
$3^{4} = 81$ objects in $X$, $15$ are in $\Aa$, while $66$ are in $\Uu$.

Observe first that, on all attributes, we have
$2 \asi 1 \asi 0$. Indeed, with $i=1$, we have:
\begin{equation*}
\begin{aligned}
(2, 0, 2, 2) \in \Aa, && (1, 0, 2, 2)  \in \Uu,\\
(1, 1, 2, 2)  \in \Aa, && (0, 1, 2, 2)  \in \Uu.
\end{aligned}
\end{equation*}
The same relations clearly hold on all attributes since the problem is symmetric.

This partition clearly has a representation in Model $\M$ with $\F = \{N\}$
and
a set of profiles
consisting of all $10$ objects in the class
$6$ (\ie having a sum of components equal to $6$). By construction, these
$10$ profiles are not linked by dominance (this is a representation in model $(\Ms^{c})$).

Our objective is to try obtaining a representation in Model $\M$ using a set
$\F$ that is not reduced to $\{\N\}$.
Notice first that bringing the veto relations into play will not help us do so.
Indeed, it is easy to check
that if a representation exists in model $\M$, a representation exists in
Model $(\Ms^{c})$ (because whatever $x_{i}$, we can find $a_{-i}$ such that
$(x_{i}, a_{-i}) \in \Aa$). Hence, let us try to find a representation in Model $(\Ms^{c})$.

This clearly excludes to take any object in the class $6$
as a profile. Indeed, a family
$\F$ that is not reduced to $\{N\}$ would then
imply that some object in a class strictly lower than $6$
belongs to $\Aa$, which is false.
Hence, we must take as profiles objects belonging to the class $7$ or $8$.

Because profiles cannot dominate one another, if we take the object $(2, 2, 2, 2)$
as a profile, it must be the only one.
We know that
$(2, 2, 1, 1) \in \Aa$.
Hence, we must have $\{1, 2\} \in \F$.
This is contradictory. Indeed, since $\{1, 2\} \in \F$, we should have
$(2, 2, 0, 0) \in \Aa$, a contradiction.

Hence the set of profiles must consist exclusively of objects belonging to the class $7$.

Suppose that there is a unique profile, \eg
$(2, 2, 2, 1)$.
It is clear that the set $\{1,2,3\}$ must be included in all elements of $\F$ (otherwise we would have
an object in the class $5$ belonging to $\Aa$).
Because $(2, 2, 2, 0) \in \Aa$, it must be true that $\{1,2,3\}$ is an element of $\F$,
which must  therefore be equal to
$\{\{1, 2, 3\},\{1, 2, 3, 4\}\}$.
This is contradictory since we know that
$(0, 2, 2, 2) \in \Aa$. It is easy to see that, the problem being symmetric,
it is therefore impossible to have a representation using a single profile
from the class $7$.

A similar reasoning can be made if we consider the cases of two or three profiles
from the class $7$ as profiles.

%

Suppose finally that we choose all four profiles from the class $7$:
$(1, 2, 2, 2)$, $(2, 1, 2, 2)$, $(2, 2, 1, 2)$, and $(2, 2, 2, 1)$.
Using the same reasoning as above, the set $\F$ must contain the sets
$\{2,3,4\}$, $\{1,3,4\}$, $\{1,2,4\}$, and $\{1,2,3\}$, since
$(0, 2, 2, 2)$, $(2, 0, 2, 2)$, $(2, 2, 0, 2)$ and
$(2, 2, 2, 0)$ are all in $\Aa$.
But this is contradictory since this would imply that
$(0, 1, 2, 2) \in \Aa$ (since $(2, 1, 2, 2)$ is a profile and $\{2,3,4\} \in \F$).

Therefore, the only possible representation of this partition in Model $\M$ must use as profiles
all 10 elements in the class $6$ together with $\F = \{N\}$.
\section{An example of linear partition not representable in Model \eqref{def:model:Add}}\label{app:AddNotEqDecomp}

Let $X = \prod_{i=1}^4 X_i $, where $X_i=\{0,1\}$. Let $\Aa= \{1100, 0011, 1110, 1101, 1011, 0111, 1111\}$ and $\Uu$ the complement of $\Aa$ in $X$. The partition $\PART$ respects the dominance relation determined by the natural order on $X_i$, for all $i$. This partition cannot be represented in Model \eqref{def:model:Add}. Assuming the contrary would entail the following:
\begin{align*}
  u_1(1) + u_2(1) + u_3(0) + u_4(0) & >0 \\
  u_1(0) + u_2(0) + u_3(1) + u_4(1) & >0.
\end{align*}
This implies that $\sum_{i=1}^{4} u_i(1) + \sum_{i=1}^{4} u_i(0) > 0$. Since $1010$ and $0101$ belong yo $\Uu$, we should also a have:
\begin{align*}
  u_1(1) + u_2(0) + u_3(1) + u_4(0) & \leq 0 \\
  u_1(0) + u_2(1) + u_3(0) + u_4(1) & \leq 0.
\end{align*}
Therefore we must have $\sum_{i=1}^{4} u_i(1) + \sum_{i=1}^{4} u_i(0) \leq 0$, a contradiction.
\end{document}